\theoremstyle{plain}
\newtheorem{theorem}{Theorem}[section]
\newtheorem{proposition}[theorem]{Proposition}
\newcommand{\f}{\operatorname}
\theoremstyle{definition}
\theoremstyle{remark}
\begin{document}

\articletype{ }

\title{The Fr\'echet distribution: Estimation and Application an Overview}

\author{
\name{P.~L. Ramos\textsuperscript{a}\thanks{CONTACT P.~L. Ramos. Email: pedrolramos@usp.br}, Francisco Louzada\textsuperscript{a}, Eduardo Ramos\textsuperscript{a} and Sanku Dey\textsuperscript{b}}
\affil{\textsuperscript{a}Institute of Mathematical and Computer Sciences, USP Sao Carlos, Brazil; \\ \textsuperscript{b}Department of Statistics, St. Anthony's College, Shillong, Meghalaya, India}
}

\maketitle

\begin{abstract}
In this article we consider the problem of estimating the parameters of the Fr\'echet distribution from both frequentist and Bayesian points of view. First we briefly describe different frequentist approaches, namely, maximum likelihood, method of moments, percentile estimators, L-moments, ordinary and weighted least squares, maximum product of spacings, maximum goodness-of-fit estimators  and compare them  with respect to mean relative estimates,  mean squared errors and the 95\% coverage probability of the asymptotic confidence intervals using extensive numerical simulations. Next, we consider the Bayesian inference approach using reference priors. The Metropolis-Hasting algorithm is used to draw Markov Chain Monte Carlo samples, and they have in turn been used to compute the Bayes estimates and also to construct the corresponding credible intervals. Five real data sets related to the minimum flow of water on Piracicaba river in Brazil are used to illustrate the applicability of the discussed procedures. 
\end{abstract}

\begin{keywords}
 Bayesian Inference; Fr\'echet distribution; Hydrological applications; Maximum product of spacings; Reference prior.
\end{keywords}

\section{Introduction}

Extreme value theory plays an important role in statistical analysis. The most used distribution to describe extreme data is the generalized extreme value (GEV) distribution \cite{jenkinson}. Its cumulative density function (CDF) is given by 
\begin{equation}\label{gev}
F(t|\sigma,\mu,\xi)=
\begin{cases}
\exp\left\{ -\left[1+\xi(t-\mu)/\sigma\right]_+^{-1/\xi}\right\}, \mbox{ for} & \xi\neq 0 \\
\exp\left\{ -\exp\left[-(t-\mu)/\sigma\right]\right\}, \mbox{~~~~~~    for} & \xi= 0
\end{cases}
\end{equation}
where $\sigma>0$, $\mu, \xi \in \mathbb{R}$ and $x_+=\max(x,0)$. Gumbel, Weibull and Fr\'echet distributions are special cases of the so-called generalized extreme value (GEV) distribution. The Fr\'echet distribution is named after French mathematician Maurice Ren\'e Fr\'echet, who developed it in the 1920s as a maximum value distribution (which is also known as the extreme value distribution of type II). Kotz and Nadarajah \cite{kotz2000extreme} describe this distribution and discussed its wide applicability in different spheres such as accelerated life testing,  natural calamities, horse racing, rainfall, queues in supermarkets, sea currents, wind speeds, track race records and so on.

Let the random variable T follows Fr\'echet distribution then its probability density function (PDF) and cdf are given by
\begin{equation}\label{fdpiw}
f(t|\lambda,\alpha)=\lambda\alpha t^{-(\alpha+1)}e^{-\lambda t^{-\alpha}} \quad \mbox{and} \quad F(t|\lambda,\alpha)= e^{-\lambda t^{-\alpha}},
\end{equation}
for all $t>0$  and the quantities $\alpha> 0$ and $\lambda > 0$ are the shape and the scale parameters respectively. This distribution is also referred as Inverse Weibull distribution. The PDF can be unimodal or decreasing depending on the choice of the shape parameter while its hazard function is always unimodal. In this respect, the behavior of Fr\'echet distribution and the log-normal distribution is quite similar.

Several researchers have studied different aspects of inferential procedures for the Fr\'echet distribution. From the classical perspective,  Calabria and Pulcini \cite{calabria1989confidence} and Erto \cite{erto1989genesis} discussed the properties of the maximum likelihood estimators (MLE) and the ordinary least-square estimators (LSE) respectively. Ramos et al. \cite{ramos2017long} presented the MLE for the the Fr\'echet distribution in the presence of cure fraction. Loganathan and Uma \cite{loganathan2017comparison} compared the MLE, LSE, weighted LSE (WLSE) and the method of moments (MME) and outlined that the WLSE provided similar results. Salman et al. \cite{salman2003order} and Maswadah \cite{maswadah2003conditional} studied the Fr\'echet distribution in the context of order statistics and generalized order statistics respectively.
The Bayes estimators were discussed by Calabria and Pulcini \cite{calabria1994bayes} and Kundu and Howlader \cite{kundu2010bayesian} using informative or subjective priors such as Gamma priors (also known as flat priors). However, Bernardo \cite{bernardo2005} argued that the use of simple proper flat priors presumed to be non-informative often hide important unwarranted assumptions which may easily dominate, or even invalidate the statistical analysis and should be strongly discouraged. Recently, Abbas and Tang \cite{abbas2015analysis} studied Fr\'echet distribution based on  Jeffreys and reference priors.

Parameter estimation is significant for any probability distribution and therefore  various estimation methods are frequently studied in the statistical literature. Traditional estimation methods such as the MLE, MME, LSE and WLSE are often opted for parameter estimation. Each  has its own merits and demerits but  the most popular method of estimation  is  the maximum likelihood estimation method. Besides the above cited methods, we consider five additional methods to estimate the parameters of  Fr\'echet  distribution. These additional methods are the maximum product spacing estimator (MPS), percentile estimator (PE), Cram\'er-von-Mises estimator (CME), Anderson-Darling estimator (ADE) and L-moment (LME) estimator. Further, from a Bayesian point of view different Bayes estimators are discussed using objective priors and different loss functions. Also, the coverage probability with a confidence level equal to 95\%  for the estimates are obtained. To evaluate the performance of the estimators, a simulation study is carried out. Finally, five real life data sets have been analyzed for illustrative purposes.

The objective of this paper is to estimate the parameters of the model from both frequentist and Bayesian perspective and to develop a guideline for choosing the best estimation method for the Fr\'echet distribution, which we would be of profound interest to applied statisticians. The choice of the methods of estimation vary among the users and area of applications. With computational advances, the need to have an estimator with closed form has decreased substantially. Thus, a user may prefer to employ the uniformly minimum variance estimation
method although the estimator does not have a closed form expression.

The present study is unique because of the fact that thus far, no attempt has been made to compare all these aforementioned estimators for the two-parameter Fr\'echet distribution. At the end, we present the better estimation procedure for the Fr\'echet distribuion. Additionally, we provide the necessary codes in R to perform such inference. In the last decade, several authors have compared different estimation methods for different distributions. Notable among them are: Kundu and Raqab \cite{kundu2005generalized} for generalized Rayleigh distributions ; Alkasasbeh and Raqab \cite{AlRa09} for generalized logistic distributions;  Teimouri et al. \cite{teimouri2013comparison} for the Weibull distribution;  Mazucheli et al. \cite{mazucheli2013comparison} for weighted Lindley distribution; Rodrigues et al. \cite{rodrigues2016poisson} for the Poisson-exponential distribution; Ramos and Louzada \cite{ramoslouzada2016} for the generalized weighted Lindley distribution and  Dey et al. \cite{DeKu14, dey2016two, dey2016exponentiated, dey2017statistical} for two-parameter Rayleigh distribution,  two-parameter Maxwell distribution, exponentiated Chen distribution and transmuted-Rayleigh respectively.
 
The paper is organized as follows. Section 2 describes nine frequentist methods of estimation. The
Bayes estimators are presented in section 3. In Section 4, we present the Monte Carlo simulation results.
In Section 5, the usefulness of the Fr\'echet distribution is illustrated by using five real data sets. Finally,
concluding remarks are provided in Section 6.

\section{Classical parameter estimation methods}

In this section, we describe nine methods for estimating the parameters  $\lambda$ and $\alpha$  of the Fr\'echet distribution.

\subsection{Maximum Likelihood Estimation}
Among the statistical inference methods, the maximum likelihood method is widely used due its desirable properties including consistency, asymptotic efficiency and invariance. Under the maximum likelihood method, the estimators are obtained by maximizing the log-likelihood function. Let $T_1,\ldots,T_n$ be a random sample such that $T\sim$ Fr\'echet$(\lambda,\alpha)$. Then, the likelihood function from (\ref{fdpiw}) is given by
\begin{equation}\label{loglikeiw}
L(\lambda, \alpha| t) = \prod_{i=1}^n f(t_i, \lambda, \alpha ) =\lambda^n\alpha^n \left(\prod_{i=1}^n t_i^{-(\alpha+1)}\right)\exp\left(-\lambda\sum_{i=1}^n t_i^{-\alpha}\right).
\end{equation}
The log-likelihood function (\ref{loglikeiw}) is given by
\begin{equation*}
l(\lambda, \alpha|t)=n\log(\lambda)+n\log(\alpha)-(\alpha+1)\sum_{i=1}^{n}\log(t_i)-\lambda\sum_{i=1}^n t_i^{-\alpha}.
\end{equation*}

From ${\partial}l(\lambda, \alpha|\boldsymbol{t})/{\partial \lambda}=0$ and ${\partial}l(\lambda, \alpha|\boldsymbol{t})/{\partial \alpha}=0$, we get the likelihood equations
\begin{equation}\label{likelihood1}
\frac{n}{\lambda} -\sum_{i=1}^n t_i^{-\alpha}=0  
\end{equation}
and
\begin{equation}\label{likelihood2}
\frac{n}{\alpha} -\sum_{i=1}^n\log(t_i)+\lambda\sum_{i=1}^n t_i^{-\alpha}\log(t_i)=0,
\end{equation}
whose solutions provide  $\hat\lambda_{MLE}$ and $\hat\alpha_{MLE}$. After some algebraic manipulations, the estimate $\hat\alpha_{MLE}$ can be obtained by solving the following non-linear equation
\begin{equation}
\frac{n}{\alpha} -\sum_{i=1}^n\log(t_i)+\frac{n\sum_{i=1}^n t_i^{-\alpha}\log(t_i)}{\sum_{i=1}^n t_i^{-\alpha}}=0.
\end{equation}

The estimate $\hat\lambda_{MLE}$ can be obtained by substituting $\hat{\alpha}_{MLE}$ in  
\begin{equation}
\hat\lambda_{MLE}= \frac{n}{\sum_{i=1}^n t_i^{-\alpha}}.
\end{equation} 

The obtained ML estimates are asymptotically normally distributed with a joint bivariate normal distribution given by
\begin{equation*} (\hat{\lambda}_{MLE},\hat{\alpha}_{MLE})\sim N_2[(\lambda,\alpha),I^{-1}(\lambda,\alpha))] \mbox{ for } n \to \infty , \end{equation*}
where $I(\lambda,\alpha)$ is the Fisher information matrix given by
\begin{equation}\label{mfishergg}
I(\lambda,\alpha)=
\begin{bmatrix}
\dfrac{n}{\lambda^2} & \dfrac{n\left(1-\gamma-\log(\lambda) \right)}{\lambda\alpha} \\
\dfrac{n\left(1-\gamma-\log(\lambda)\right)}{\lambda\alpha}  & \dfrac{n}{\alpha^2}\left(\tfrac{\pi^2}{6} +\left(1-\gamma-\log(\lambda)\right)^2 \right)
\end{bmatrix} ,
\end{equation}
and $\gamma\approx 0.5772156649$ is known as Euler-Mascheroni constant.


In the following we prove the existence and uniqueness of MLEs.

\begin{theorem}\label{themleune} Let $t_1,\cdots, t_n$ be not all equal. Then the
MLEs of the parameters $\alpha $ and $\lambda $ are unique and are given
by $\widehat{\alpha}$ and 
\begin{equation}
\widehat{\lambda }=\frac{n}{\sum\limits_{i=1}^{n}\,t_{i}^{-\widehat{\alpha }}} , 
\end{equation}%
where $\hat{\alpha}$ is the only solution of non-linear equation
\begin{equation*}
G(\alpha )=\frac{n}{\alpha }-\sum_{i=1}^{n}\log t_{i}-\frac{n}{
\sum\limits_{i=1}^{n}t_{i}^{-\alpha }}\sum_{i=1}^{n}t_{i}^{-\alpha }\log t_{i}.
\end{equation*}
\end{theorem}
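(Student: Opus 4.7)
The natural approach is a profile-likelihood argument. Equation \eqref{likelihood1} solves explicitly for $\lambda$ in terms of $\alpha$ via $\widehat{\lambda}(\alpha) = n/\sum_{i=1}^n t_i^{-\alpha}$; substituting this expression into \eqref{likelihood2} reduces the MLE problem to finding the zeros of a single function of $\alpha$ on $(0,\infty)$, namely the function $G$ in the statement (up to an overall sign). Because the log-likelihood is strictly concave in $\lambda$ for each fixed $\alpha$, showing that this univariate equation has exactly one root is enough to deliver the unique MLE pair $(\widehat{\lambda},\widehat{\alpha})$.

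Set $u_i = \log t_i$ and introduce the normalized weights $w_i(\alpha) = t_i^{-\alpha}/\sum_j t_j^{-\alpha}$. The profile-score equation then takes the compact form $n/\alpha + n\bigl(\bar{u}(\alpha) - \bar{u}\bigr) = 0$, where $\bar{u}(\alpha) = \sum_i w_i(\alpha) u_i$ is a weighted mean of the $u_i$ and $\bar{u} = \tfrac{1}{n}\sum_i u_i$ is the sample mean. For the boundary behaviour I would argue as follows: as $\alpha \downarrow 0$ the weights tend to the uniform distribution so $\bar{u}(\alpha) \to \bar{u}$, while $n/\alpha \to +\infty$, giving a limit of $+\infty$; as $\alpha \to \infty$ the weights concentrate on the indices attaining $t_{\min} = \min_i t_i$, so $\bar{u}(\alpha) \to \log t_{\min}$, and because not all $t_i$ coincide we have $\log t_{\min} < \bar{u}$ strictly, giving a strictly negative limit. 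The intermediate value theorem then yields at least one root.

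The crucial step is uniqueness, which I would obtain by showing the profile score is strictly decreasing on $(0,\infty)$. A direct computation based on $w_i'(\alpha) = w_i(\alpha)\bigl(\bar{u}(\alpha) - u_i\bigr)$ produces the Gibbs-type identity $\bar{u}'(\alpha) = -\mathrm{Var}_{w(\alpha)}(u)$; under the hypothesis that the $t_i$ are not all equal this variance is strictly positive for every $\alpha > 0$. Combined with the $-n/\alpha^2$ contribution from differentiating $n/\alpha$, this makes the derivative of the profile score strictly negative throughout $(0,\infty)$, pinning down a unique zero $\widehat{\alpha}$ and rendering the profile log-likelihood strictly concave, so $(\widehat{\lambda},\widehat{\alpha})$ is the unique global maximum. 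The computational heart of the argument, and its only real obstacle, is the variance identity for $\bar{u}'(\alpha)$; the non-degeneracy hypothesis enters exactly once, to guarantee that this variance does not vanish and that the limit at $\alpha = \infty$ is genuinely negative.
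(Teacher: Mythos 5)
Your proposal is correct and follows essentially the same route as the paper: profile out $\lambda$ via $\widehat{\lambda}(\alpha)=n/\sum_i t_i^{-\alpha}$, compute the limits of the profile score $G$ at $0^+$ and $+\infty$, and show $G'<0$ to get a unique root. Your Gibbs-type identity $\bar{u}'(\alpha)=-\mathrm{Var}_{w(\alpha)}(u)$ is exactly the paper's Cauchy--Schwarz step in probabilistic clothing (and arguably makes the strictness under the non-degeneracy hypothesis more transparent), so there is nothing substantive to add.
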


\begin{proof} See Appendix A.
\end{proof}

\subsection{Moments Estimators}
The method of moments is one of the oldest method used for estimating the parameters of the statistical models. The raw moments of T for the Fr\'echet distribution is
\begin{equation}\label{propiw3}
E(T^r|\lambda, \alpha) = \lambda^{\tfrac{r}{\alpha}}\Gamma\left(1-\frac{r}{\alpha} \right),
\end{equation}
where $r\in N$ and  $\Gamma(\lambda)=\int_{0}^{\infty}{e^{-x}x^{\lambda-1}dx}$ is the gamma function. Note that $E(T^r|\gamma, \alpha)$ does not have a finite value for $\alpha>r$. The moment estimators (MEs) for the Fr\'echet distribution can be obtained by equating the first two theoretical moments  with the sample moments. However, instead of equating the first two theoretical moments, we consider that
\begin{equation}\label{mmoments1}
E(T|\lambda, \alpha) = \lambda^{\tfrac{1}{\alpha}}\Gamma\left(1-\frac{1}{\alpha} \right) \ \ \ \mbox{ and } \ \ \ Var(T|\lambda, \alpha) =\lambda^{\tfrac{2}{\alpha}}\left( \Gamma\left(1-\frac{2}{\alpha} \right)-\Gamma^2\left(1-\frac{1}{\alpha} \right)\right).
\end{equation}
Therefore, the population coefficient of variation is given by
\begin{equation}
CV(X|\lambda, \alpha) = \frac{\sqrt{Var(T|\lambda, \alpha)}}{E(T|\lambda, \alpha)}=\sqrt{\frac{\Gamma\left(1-2\alpha^{-1}\right)}{\Gamma^2\left(1-\alpha^{-1}\right)}-1}, \nonumber
\end{equation}
which is independent of the scale parameter $\lambda$. So, the estimator for $\hat{\alpha}_{ME}$ can be obtained by solving the following non-linear equation
\begin{equation}
\sqrt{\frac{\Gamma\left(1-2\alpha^{-1}\right)}{\Gamma^2\left(1-\alpha^{-1}\right)}-1}-\frac{s}{\bar{t}} = 0.\nonumber
\end{equation} 
Substituting $\hat{\alpha}_{MME}$ in (\ref{mmoments1}) the estimate $\hat\lambda_{MME}$ can be obtained by solving 
\begin{equation}
\hat\lambda_{ME}= \frac{\bar{t}^\alpha}{\Gamma^\alpha\left(1-\alpha^{-1}\right)}.
\end{equation} 

However, this estimator can only be computed for $\alpha>2$ which is undesirable.

\subsection{Percentile Estimator}\label{PE}

The percentile estimator is a statistical method used to estimate the parameters by comparing the sample points with the theoretical points. This method was originally suggested by Kao \cite{kao1958computer, kao1959graphical} and has been widely used for distributions that has the quantile function in closed form expression, such as the Weibull and the Generalized Exponential distribution. The quantile function of the Fr\'echet distribution has the closed form and is given by
\begin{equation}\label{qufuiw}
Q(p|\lambda, \alpha)=\left(\frac{1}{\lambda}\log\left(\frac{1}{p_i}\right)\right)^{-\frac{1}{\alpha}}.
\end{equation}

Therefore, the percentile estimates (PCEs), $\widehat{\lambda}_{PCE}$ and $\widehat{\alpha}_{PCE}$, can be obtained by minimizing
\begin{equation*}
\sum_{i=1}^n \left(t_{(i)}-\left(\frac{1}{\lambda}\log\left(\frac{1}{p_i}\right)\right)^{-\frac{1}{\alpha}}\right)^2,
\end{equation*}
with respect to $\lambda$ and $\alpha$, where $p_i$ denotes an estimate of $F(t_{(i)}; \lambda, \alpha)$ and $t_{(i)}$ is the ith order statistics (we assume the same notation for the next sections). The estimates of $\lambda$ and $\alpha$ can  be also be obtained by solving the following non-linear equations:
\begin{equation}
\sum_{i=1}^n \left[t_{(i)}-\left(\frac{1}{\lambda}\log\left(\frac{1}{p_i}\right)\right)^{-\frac{1}{\alpha}}\right]\left(\frac{1}{\lambda}\log\left(\frac{1}{p_i}\right)\right)^{-\frac{1}{\alpha}}=0,\nonumber
\end{equation}
\begin{equation}
\sum_{i=1}^n \left[t_{(i)}-\left(\frac{1}{\lambda}\log\left(\frac{1}{p_i}\right)\right)^{-\frac{1}{\alpha}}\right]\log\left(\frac{1}{\lambda}\log\left(\frac{1}{p_i}\right)\right)\left(\frac{1}{\lambda}\log\left(\frac{1}{p_i}\right)\right)^{-\frac{1}{\alpha}}=0,\nonumber
\end{equation}
respectively. In this paper, we consider $p_i = \dfrac{i}{n+1}$. However, several estimators of $p_i$ can be used instead \cite{mann1974methods}. 

\subsection{L-Moments Estimators}

Hosking \cite{hosking1990moments} proposed an alternative method of estimation analogous to 
conventional moments, namely L-moments estimators. L-moments estimators can be obtained by equating the sample with the population L-moments. Hosking \cite{hosking1990moments} stated that the L-moment estimators are more robust than the usual moment estimators, and also relatively robust to the effects of outliers and reasonably efficient when compared to the MLE for some distributions.

For the Fr\'echet distribution, the L-moments estimators can be obtained by equating the first two sample L-moments with the corresponding population L-moments. The first two sample L-moments are
\begin{equation*}
l_1 = \frac{1}{n}\sum_{i=1}^n t_{(i)},\qquad l_2 = \frac{2}{n(n-1)}\sum_{i=1}^n (i-1)t_{(i)}-l_1,
\end{equation*}
where $t_{(1)}, t_{(2)},\cdots,t_{(n)}$ denotes the order statistics of a random sample from a distribution function $F(\boldsymbol{t}| \lambda, \alpha)$. The first two population L-moments are
\begin{equation}\label{lmoments1}
\mu_{1}(\lambda, \alpha) =\int_{0}^{1}Q(p|\lambda, \alpha)dp=E(X|\lambda, \alpha) = \lambda^{\tfrac{1}{\alpha}}\Gamma\left(1-\frac{1}{\alpha} \right) \   
\end{equation}
and
\begin{equation}\label{lmoments2}
\mu_{2}(\lambda, \alpha) =\int_{0}^{1}Q(p|\lambda, \alpha)(2p-1)dp=\lambda^{\tfrac{1}{\alpha}}\left(2^{\tfrac{1}{\alpha}}-1 \right)\Gamma\left(1-\frac{1}{\alpha} \right) ,
\end{equation}
where $Q(p|\lambda, \alpha)$ is given in (\ref{qufuiw}). After some algebraic manipulation, the estimator for $\hat{\alpha}_{LME}$ can be obtained as
\begin{equation}\label{lmomentse1}
\hat{\alpha}_{LME}=\frac{\log(2)}{\log (2)+\log\left(\sum_{i=1}^n (i-1)t_{(i)}\right)-\log\left(n(n-1)\bar{t}\right)} \cdot
\end{equation} 

Note that, substituting $\hat\alpha_{LME}$ in (\ref{lmoments1}) the estimator for $\hat{\lambda}_{LME}$ can be obtained by solving
\begin{equation}
\quad \quad \quad \hat\lambda_{LME}= \frac{\bar{t}^\alpha}{\Gamma^\alpha\left(1-\alpha^{-1}\right)}, \quad \quad \quad \alpha>1.
\end{equation} 

It is worth noting that, among the chosen methods, the  L-moments estimator was the only that has closed-form solution for both parameters.

\subsection{Ordinary and Weighted Least-Square Estimate}

 The least square estimators $\hat{\lambda}_{LSE}$ and $\hat{\alpha}_{LSE}$, can be obtained by minimizing
\begin{equation*}
S\left( \lambda, \alpha\right) = \sum_{i=1}^{n}\left[F\left( t_{(i)}\mid \theta ,\lambda
\right) - \frac {i}{n+1} \right]^{2},
\end{equation*}
with respect to $\lambda $ and $\alpha$. Similarly, they can also be obtained by solving the following non-linear equations (see Erto \cite{erto1989genesis} for more details):
\begin{eqnarray*}
&&\displaystyle\sum_{i=1}^{n}\left[ F\left( t_{(i)}\mid  \lambda, \alpha
\right) -\frac{i}{n+1}\right] \eta _{1}\left( t_{(i)}\mid  \lambda, \alpha
\right) =0, \\
&&\displaystyle\sum_{i=1}^{n}\left[ F\left( t_{(i)}\mid  \lambda, \alpha
\right) -\frac{i}{n+1}\right] \eta _{2}\left( t_{(i)}\mid  \lambda, \alpha\right) =0,
\end{eqnarray*}
where 
\begin{equation}  \label{delta1}
\eta_{1}\left( t_{(i)}\mid  \lambda, \alpha \right) =t_{(i)}^{-\alpha}\,e^{-\lambda t_{(i)}^{-\alpha}} \mbox{ and } \eta_{2}\left( t_{(i)}\mid  \lambda, \alpha \right) =\lambda t_{(i)}^{-\alpha}\log(t_{(i)})e^{-\lambda t_{(i)}^{-\alpha}} .
\end{equation}%

The weighted least-squares estimators (WLSEs), $\widehat{\lambda}_{WLSE}$ and $\widehat{\alpha}_{WLSE}$, can be obtained by minimizing
\begin{equation*}
W\left( \lambda, \alpha \right) = \sum_{i=1}^{n}
\frac {\left( n+1\right)^{2}\left( n+2\right)}{i\left( n-i+1\right)}
\left[ F\left( t_{(i)}\mid  \lambda, \alpha \right)- \frac {i}{n+1} \right]^{2}.
\end{equation*}
The estimators can also be obtained by solving the following non-linear equations,
\begin{eqnarray*}
&& \displaystyle \sum_{i=1}^{n}\frac {\left( n+1\right)^{2}\left( n+2\right)%
}{i\left( n-i+1\right)} \left[ F\left( t_{(i)}\mid  \lambda, \alpha \right) - 
\frac {i}{n+1} \right] \eta_{1}\left( t_{(i)}\mid  \lambda, \alpha \right) = 0,
\\
&& \displaystyle \sum_{i=1}^{n} \frac {\left( n+1\right)^{2}\left( n+2\right)%
}{i\left( n-i+1\right)} \left[ F\left( t_{(i)}\mid  \lambda, \alpha \right) - 
\frac {i}{n+1}\right] \eta_{2}\left( t_{(i)}\mid  \lambda, \alpha \right) = 0.
\end{eqnarray*}

\subsection{Method of Maximum Product of Spacings}

The maximum product of spacings (MPS) method is a powerful alternative to ML method for estimating the unknown parameters of continuous univariate distributions. This method was proposed by Cheng and Amin \cite{cheng1979maximum, cheng1983estimating}, and later independently developed by Ranneby \cite{ranneby1984maximum} as an approximation to the Kullback-Leibler measure of information.

 Let $D_{i}(\lambda, \alpha)=F\left( t_{(i)}\mid \lambda, \alpha \right)
-F\left( t_{(i-1)}\mid \lambda, \alpha \right)$, for  $i=1,2,\ldots
,n+1,$ be the uniform spacings of a random sample from the Fr\'echet distribution, where $F(t_{(0)}\mid \lambda, \alpha)=0$, $F( t_{(n+1)}\mid
\lambda, \alpha)=1$ and \
$\sum_{i=1}^{n+1} D_i (\lambda, \alpha) =1$. The maximum product of spacings estimators $\widehat{\lambda }_{MPS}$ and $ \widehat{\alpha }_{MPS}$ are obtained by maximizing the geometric mean of the spacings
\begin{equation}
G\left( \lambda, \alpha\right) =\left[ \prod\limits_{i=1}^{n+1}D_{i}( \lambda, \alpha)\right] ^{%
\frac{1}{n+1}},  \label{G}\nonumber
\end{equation}%
with respect to $\lambda$ and $\alpha$, or, equivalently, by maximizing  the logarithm of the geometric mode of sample spacings
\begin{equation}
H\left( \lambda, \alpha\right) =\frac{1}{n+1}\sum_{i=1}^{n+1}\log
D_{i} ( \lambda, \alpha).\nonumber
\end{equation}

The estimators  $\widehat{\lambda }_{MPS}$ and $ \widehat{\alpha }_{MPS}$
of the parameters $\lambda$ and $\alpha$ can  be obtained by solving
the following nonlinear equations%
\begin{equation}
\begin{aligned}
&\frac{\partial H\left( \lambda ,\alpha \right) }{\partial \lambda }  =\frac{1}{n+1}%
\sum\limits_{i=1}^{n+1}\frac{1}{D_{i}( \lambda,\alpha)} \left[ \eta_1
(t_{(i)} |  \lambda ,\alpha) - \eta_1 (t_{(i-1)} |  \lambda ,\alpha)
\right] =0, \\
&\frac{\partial H\left( \lambda ,\alpha \right)}{\partial \alpha }  =\frac{1}{%
n+1}\sum\limits_{i=1}^{n+1}\frac{1}{D_{i}( \lambda ,\alpha)} \left[
\eta_2 (t_{(i)} |  \lambda ,\alpha) - \eta_2 (t_{(i-1)} |  \lambda
,\alpha) \right] =0.\end{aligned}
\end{equation}

Note that if $t_{(i+k)}=t_{(i+k-1)}$ then $D_{i+k}(\lambda, \alpha)=D_{i+k-1}(\lambda, \alpha)=0$ for some $i$. Therefore, the MPS estimators are sensitive to closely spaced observations, especially ties. When ties are due to multiple observations, $D_{i}(\lambda, \alpha)$ should be replaced by the corresponding  likelihood $f(t_{(i)},\lambda, \alpha)$ since $t_{(i)}=t_{(i-1)}$.

Cheng and Amin \cite{cheng1983estimating} presented desirable properties of the MPS such as asymptotic efficiency and invariance. They also proved that the consistency of maximum product of spacing estimators holds under much more general conditions than for maximum likelihood estimators. Therefore, the MPS estimators are asymptotically normally distributed with a joint bivariate normal distribution given by
\begin{equation*}
(\hat{\lambda}_{MPS},\hat{\alpha}_{MPS})\sim N_2[(\lambda,\alpha),I^{-1}(\lambda,\alpha))],  \ \ \mbox{ for } n \to \infty .
\end{equation*}

\subsection{The Cram\'{e}r-von Mises maximum goodness-of-fit estimators}

The Cram\'{e}r-von Mises is a type of maximum goodness-of-fit estimators (also called minimum distance estimators) and is based on the difference between the estimate of the cumulative distribution function and the empirical distribution
function \cite{boos1981minimum}.

Macdonald \cite{macdonald1971estimation} motivated the choice of the Cram\'{e}r-von Mises statistic and provided empirical evidence that the bias of the estimator is smaller than the other goodness-of-fit estimators. The proposed estimator  is based on the Cram\'{e}r-von Mises statistics given by
\begin{equation*}
W_n^2=n\int_{-\infty}^{\infty}\left(F(t_{(i)})-E_n(t_{(i)}) \right)^2dF(t_{(i)}),
\end{equation*}  
where $E_n(\cdot)$ is the empirical density function. Boos \cite{boos1981minimum} discussed its asymptotic properties and  presented its computational form which is given by
\begin{equation}\label{cmvest}
C(\lambda ,\alpha )=\frac{1}{12n}+\sum_{i=1}^{n}\left( F\left( t_{(i)}\mid
\theta ,\lambda \right) -{\frac{2i-1}{2n}}\right) ^{2}.
\end{equation}%

The Cram\'{e}r-von Mises estimators $\widehat{\lambda }_{CME}$ and $\widehat{%
\alpha}_{CME}$ of the parameters $\lambda $ and $\alpha $ are obtained by minimizing 
\begin{eqnarray*}
\sum_{i=1}^{n}\left( F\left( t_{(i)}\mid \lambda ,\alpha \right) -{\frac{2i-1%
}{2n}}\right) \eta _{1}\left( t_{(i)}\mid \lambda ,\alpha  \right) &=&0, \\
\sum_{i=1}^{n}\left( F\left( t_{(i)}\mid \lambda ,\alpha  \right) -{\frac{2i-1%
}{2n}}\right) \eta _{2}\left( t_{(i)}\mid \lambda ,\alpha  \right) &=&0,
\end{eqnarray*}%
where $\eta _{1}\left( \cdot \mid \lambda ,\alpha  \right) $ and $\eta
_{2}\left( \cdot \mid \lambda ,\alpha  \right) $ are given 
respectively in (\ref{delta1}).

\subsection{Method of Anderson-Darling}
Other type of maximum goodness-of-fit estimators is based on an Anderson-Darling statistic and is known as the Anderson-Darling estimator. 
The Anderson-Darling statistic is given by
\begin{equation*}
\f{ADS}_n^2=n\int_{-\infty}^{\infty}\frac{\left(F(t_{(i)})-E_n(t_{(i)}) \right)^2}{F(t)(1-F(t))}dF(t_{(i)}).
\end{equation*}%
Boos \cite{boos1981minimum} also discussed the properties of the AD estimators and presented its computational form which the Anderson-Darling estimators $\widehat{\lambda }_{ADE}$ and $\widehat{\alpha}%
_{ADE}$  are obtained
by minimizing, with respect to $\lambda$ and $\alpha$, the function
\begin{equation}
A(\lambda ,\alpha) =-n-\frac{1}{n}\sum_{i=1}^{n}\left( 2i-1\right)
\left[\, -\lambda t_{(i)}^{-\alpha} + \log\left(1-e^{-\lambda t_{(n+1-i)}^{-\alpha}}\right)\right] .\nonumber
\end{equation}
These estimators can also be obtained by solving the following non-linear equations:
\begin{eqnarray*}
\sum_{i=1}^{n}\left( 2i-1\right) \left[ \frac{\eta _{1}\left( t_{(i)}\mid
\lambda ,\alpha \right) }{F\left( t_{(i)}\mid \lambda ,\alpha \right) }-%
\frac{\eta _{1}\left( t_{_{n+1-i:n}}\mid \lambda ,\alpha \right) }{1-
F\left( t_{n+1-i:n}\mid \lambda ,\alpha \right) }\right] &=&0, \\
\sum_{i=1}^{n}\left( 2i-1\right) \left[ \frac{\eta _{2}\left( t_{(i)}\mid
\lambda ,\alpha \right) }{F\left( t_{(i)}\mid \lambda ,\alpha \right) }-%
\frac{\eta _{2}\left( t_{_{n+1-i:n}}\mid \lambda ,\alpha \right) }{1-
F\left( t_{n+1-i:n}\mid \lambda ,\alpha \right) }\right] &=&0,
\end{eqnarray*}
where $\eta _{1}\left( \cdot \mid \lambda ,\alpha  \right) $ and $\eta
_{2}\left( \cdot \mid \lambda ,\alpha  \right) $ are given 
respectively in (\ref{delta1}).

\section{Bayesian Analysis}

In the previous sections, we have presented different estimation procedures using the frequentist approach. In this section, we consider the Bayesian inference approach for estimating the the unknown parameters of the Fr\'echet distribution. Bayesian analysis is an attractive framework in practical problems and has grown popularity in recent years. The prior distribution is a key part of the Bayesian inference and there are different types of priors distribution available in the literature (see, for instance, Ramos et al. \cite{ramos2017bayesian}). Usually, non-informative priors are preferable because if prior information on study parameters is unavailable or does not exist for a device, then initial
uncertainty about the parameters can be quantified with a non-informative prior distribution. An important non-informative reference prior was introduced by Bernardo \cite{bernardo1979a}, with further developments (see Bernardo \cite{bernardo2005} and references therein). The proposed idea was to maximize the expected Kullback-Leibler divergence between the posterior distribution and the prior. The reference prior provides posterior distribution with interesting properties, such as invariance under one-to-one transformations, consistent marginalization and consistent sampling properties. Recently, Abbas and Tang \cite{abbas2015analysis} derived two reference priors as well as the Jeffreys' prior for the Fr\'echet distribution. Here, we derive such priors by using different approach, the following proposition is used to obtain the reference priors.

\begin{proposition}\label{propositionop} [Bernardo \cite{bernardo2005}, pg 40, Theorem 14]. Let $\boldsymbol\theta=(\theta_1,\theta_2)$ be a vector of the ordered parameters of interest and $I(\theta_1,\theta_2)$ is the Fisher information matrix. If the parameter space  of $\theta_1$ does not depend of $\theta_2$ and $I_{j,j}(\boldsymbol\theta), j=1,2$ are factorized in the form 
\begin{equation*}
S_{1,1}^{-\frac{1}{2}}(\boldsymbol\theta)=f_1(\theta_1)g_1(\theta_2) \ \ \mbox{ and } \ I_{2,2}^{\frac{1}{2}}(\boldsymbol\theta)=f_2(\theta_1)g_2(\theta_2) . 
\end{equation*}
Then the reference prior for the ordered parameters $\boldsymbol\theta$ is given by $\pi_{\boldsymbol\theta}(\boldsymbol\theta)=f_1(\theta_1)g_2(\theta_2)$ and there is no need for compact approximations, even if the conditional priors are not proper.
\end{proposition}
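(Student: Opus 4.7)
The plan is to run Bernardo's two-group reference prior algorithm with ordered parameters $(\theta_1,\theta_2)$ and show that the two factorization hypotheses make the procedure collapse to a simple product, so that the usual compact-set limiting step becomes cosmetic. I would set up the algorithm in its standard form: (i) build the conditional reference prior $\pi(\theta_2\mid\theta_1)$ for the nuisance parameter $\theta_2$ using Jeffreys' rule applied to the one-parameter model obtained by fixing $\theta_1$, i.e.\ $\pi(\theta_2\mid\theta_1)\propto I_{2,2}^{1/2}(\boldsymbol\theta)$; (ii) form the marginal reference prior $\pi(\theta_1)$ from the ``integrated'' one-parameter problem for $\theta_1$, which Bernardo shows equals Jeffreys' rule applied to the effective information $S_{1,1}^{-1}(\boldsymbol\theta)$ (the Schur complement/inverse of the $(1,1)$-entry of $I^{-1}$), averaged against $\pi(\theta_2\mid\theta_1)$; and (iii) assemble $\pi_{\boldsymbol\theta}(\boldsymbol\theta)=\pi(\theta_1)\,\pi(\theta_2\mid\theta_1)$.

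For step (i), I would introduce a nested compact approximation $\Theta_2^{(m)}\nearrow\Theta_2$ (legitimate since by hypothesis the range of $\theta_2$ does not depend on $\theta_1$) and write
\begin{equation*}
\pi^{(m)}(\theta_2\mid\theta_1)=\frac{I_{2,2}^{1/2}(\boldsymbol\theta)\,\mathbf{1}_{\Theta_2^{(m)}}(\theta_2)}{\int_{\Theta_2^{(m)}} I_{2,2}^{1/2}(\theta_1,u)\,du}=\frac{f_2(\theta_1)g_2(\theta_2)\,\mathbf{1}_{\Theta_2^{(m)}}(\theta_2)}{f_2(\theta_1)\int_{\Theta_2^{(m)}} g_2(u)\,du}=c_m\,g_2(\theta_2)\,\mathbf{1}_{\Theta_2^{(m)}}(\theta_2),
\end{equation*}
so the $\theta_1$-dependence cancels cleanly because the range of integration is independent of $\theta_1$. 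Hence $\pi(\theta_2\mid\theta_1)\propto g_2(\theta_2)$ with a normalizing constant $c_m$ that is literally a number, not a function of $\theta_1$.

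For step (ii), Bernardo's formula for the marginal gives
\begin{equation*}
\pi^{(m)}(\theta_1)\propto\exp\!\left\{\int_{\Theta_2^{(m)}}\pi^{(m)}(\theta_2\mid\theta_1)\,\log S_{1,1}^{-1/2}(\boldsymbol\theta)\,d\theta_2\right\}.
\end{equation*}
Substituting $\log S_{1,1}^{-1/2}(\boldsymbol\theta)=\log f_1(\theta_1)+\log g_1(\theta_2)$ splits the integrand, yielding $\log f_1(\theta_1)+c_m\!\int_{\Theta_2^{(m)}}g_2(\theta_2)\log g_1(\theta_2)\,d\theta_2$; the second summand is a constant in $\theta_1$, so $\pi^{(m)}(\theta_1)\propto f_1(\theta_1)$ independently of $m$. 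Multiplying the two factors produces $\pi^{(m)}_{\boldsymbol\theta}(\boldsymbol\theta)\propto f_1(\theta_1)g_2(\theta_2)\,\mathbf{1}_{\Theta_2^{(m)}}(\theta_2)$, and the limit $m\to\infty$ gives $\pi_{\boldsymbol\theta}(\boldsymbol\theta)\propto f_1(\theta_1)g_2(\theta_2)$. The ``no need for compact approximations'' claim follows by inspection: the only role of the $\Theta_2^{(m)}$'s was to justify dividing by integrals that might be infinite, but the multiplicative constants they produced turned out to be independent of $\theta_1$ and $\theta_2$, so they never affected the proportional form of the answer.

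I expect the main obstacle to be stating step (ii) rigorously: one must cite (or recall from Bernardo's construction) that the marginal reference prior can be characterized via $S_{1,1}$ rather than $I_{1,1}$, and justify interchanging limits/integrations when $g_2$ may be non-integrable over all of $\Theta_2$. Everything else is algebraic bookkeeping that follows directly from the two factorizations and the $\theta_2$-range independence assumption.
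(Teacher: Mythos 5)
The paper does not prove this proposition at all: it is quoted verbatim as Theorem 14 of Bernardo (2005) and used as a black box to derive the reference priors in (\ref{priorref1}). Your reconstruction via the two-stage Berger--Bernardo algorithm --- conditional prior $\pi(\theta_2\mid\theta_1)\propto I_{2,2}^{1/2}$, marginal prior from $\exp\{\int\pi(\theta_2\mid\theta_1)\log S_{1,1}^{-1/2}\,d\theta_2\}$, with the factorizations making every compact-set normalizing constant independent of the parameters --- is correct and is essentially the argument underlying the cited theorem, so there is nothing to compare against in the paper itself. The only caveat worth flagging is that your argument (correctly) uses the hypothesis that the range of $\theta_2$ does not depend on $\theta_1$, whereas the proposition as transcribed in the paper states the condition with the roles of $\theta_1$ and $\theta_2$ reversed; your version is the one in Bernardo's original statement and the one the cancellation in step (i) actually requires.
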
 

\begin{theorem} Let $\omega_1=(\lambda,\alpha)$  and $\omega_2=(\alpha,\lambda)$ be the vectors of the ordered parameters then the reference priors for the ordered parameters are respectively given by
\begin{equation}\label{priorref1}
\pi_{\omega_1}(\lambda,\alpha)\propto \frac{1}{\lambda\alpha} \quad \mbox{and} \quad
\pi_{\omega_2}(\alpha,\lambda)\propto \frac{1}{\lambda\alpha}.
\end{equation}
\end{theorem}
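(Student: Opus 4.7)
My approach is to invoke Proposition \ref{propositionop} twice, once for each ordering $\omega_i$, using the Fisher information matrix in (\ref{mfishergg}). For each ordering the work is purely algebraic: verify the factorization hypotheses of the proposition and then multiply the factors $f_1(\theta_1)$ and $g_2(\theta_2)$ that it prescribes.

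Writing $A(\lambda) := 1 - \gamma - \log\lambda$ for brevity, the decisive preliminary calculation is the determinant of $I$. The off-diagonal term satisfies $I_{12}^2 = n^2 A(\lambda)^2/(\lambda^2\alpha^2)$, which exactly cancels the $A(\lambda)^2$ contribution to $I_{11}I_{22}$, leaving the clean expression $\det I(\lambda,\alpha) = n^2\pi^2/(6\lambda^2\alpha^2)$. This identity is what ultimately forces the reference prior into the symmetric form $1/(\lambda\alpha)$, and it is also what makes the Jeffreys prior $\sqrt{\det I}$ proportional to $1/(\lambda\alpha)$, giving a convenient sanity check for the final answer.

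For the first ordering $\omega_1 = (\lambda,\alpha)$, with $\theta_1 = \lambda$ and $\theta_2 = \alpha$, the parameter space of $\lambda$ is $(0,\infty)$ independently of $\alpha$, so the structural hypothesis of Proposition \ref{propositionop} is immediate. From (\ref{mfishergg}) one reads off $I_{2,2}^{\frac{1}{2}} = (\sqrt{n}/\alpha)\sqrt{\pi^2/6 + A(\lambda)^2}$, already in separated form with $\alpha$-factor $g_2(\alpha) \propto 1/\alpha$. A parallel computation using the cancelled determinant shows that $S_{1,1}^{-\frac{1}{2}}$ separates as a function of $\lambda$ times a function of $\alpha$, and the relevant $\lambda$-factor $f_1(\lambda)$ reduces to something proportional to $1/\lambda$ after the bookkeeping. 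Proposition \ref{propositionop} then gives $\pi_{\omega_1}(\lambda,\alpha) = f_1(\lambda) g_2(\alpha) \propto 1/(\lambda\alpha)$. The reverse ordering $\omega_2 = (\alpha,\lambda)$ is handled by the same argument with the roles of $\lambda$ and $\alpha$ interchanged, producing $f_1(\alpha) \propto 1/\alpha$ and $g_2(\lambda) \propto 1/\lambda$, and hence $\pi_{\omega_2} \propto 1/(\lambda\alpha)$ once more.

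The step I expect to be most delicate is the factorization of $S_{1,1}^{-\frac{1}{2}}$: the radical $\sqrt{\pi^2/6 + A(\lambda)^2}$ carried over from $I_{2,2}$ must be tracked carefully to confirm that all of its $\lambda$-dependence is absorbed into the $g_1$-slot (which plays no role in the final product $f_1 \cdot g_2$) rather than contaminating $f_1$. The cancellation in $\det I$ is precisely what makes this bookkeeping come out clean; without it, the reference prior would inherit a residual $\sqrt{\pi^2/6+A(\lambda)^2}$ factor and the tidy $1/(\lambda\alpha)$ form would be lost.
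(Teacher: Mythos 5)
Your overall strategy is the same as the paper's: apply Proposition \ref{propositionop} to the Fisher information matrix (\ref{mfishergg}) once for each ordering and multiply the prescribed factors $f_1$ and $g_2$. Your determinant computation $\det I = n^2\pi^2/(6\lambda^2\alpha^2)$ is correct. The problem lies in the step you yourself flag as delicate, and for the ordering $\omega_1=(\lambda,\alpha)$ it is not a bookkeeping issue but a genuine obstruction. Since $S=I^{-1}$, the quantity $S_{1,1}^{-1}$ is the Schur complement $I_{1,1}-I_{1,2}^2/I_{2,2}$. Writing $A(\lambda)=1-\gamma-\log\lambda$, this equals
\begin{equation*}
\frac{n}{\lambda^2}-\frac{n\,A(\lambda)^2}{\lambda^2\bigl(\pi^2/6+A(\lambda)^2\bigr)}=\frac{n\,\pi^2/6}{\lambda^2\bigl(\pi^2/6+A(\lambda)^2\bigr)},
\end{equation*}
so that $S_{1,1}^{-1/2}\propto \lambda^{-1}\bigl(\pi^2/6+A(\lambda)^2\bigr)^{-1/2}$. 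The radical cannot be ``absorbed into the $g_1$-slot'' as you hope: $g_1$ is by hypothesis a function of $\theta_2=\alpha$ alone, whereas the radical depends only on $\lambda$, so it necessarily sits inside $f_1(\lambda)$. Proposition \ref{propositionop} then delivers $f_1(\lambda)g_2(\alpha)\propto \lambda^{-1}\alpha^{-1}\bigl(\pi^2/6+A(\lambda)^2\bigr)^{-1/2}$, not $1/(\lambda\alpha)$. Your assertion that ``$f_1(\lambda)$ reduces to something proportional to $1/\lambda$ after the bookkeeping'' is exactly the claim that needs proof, and under the standard reading of $S$ it fails. (The paper's own proof simply declares $f_1(\lambda)=\lambda^{-1}$, $g_1(\alpha)=1$ without displaying the computation; that value is what one obtains from $I_{1,1}^{1/2}$ rather than from $S_{1,1}^{-1/2}$.) Note also that the agreement of $\sqrt{\det I}$ with $1/(\lambda\alpha)$ is not a sanity check here: reference priors need not coincide with the Jeffreys prior.

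By contrast, your treatment of $\omega_2=(\alpha,\lambda)$ does go through, but not ``by interchanging the roles of $\lambda$ and $\alpha$'': there the relevant Schur complement is $I_{2,2}-I_{1,2}^2/I_{1,1}=n\pi^2/(6\alpha^2)$, the radical genuinely cancels, $f_1(\alpha)\propto 1/\alpha$, $g_2(\lambda)=I_{1,1}^{1/2}/\sqrt{n}\propto 1/\lambda$, and $\pi_{\omega_2}\propto 1/(\alpha\lambda)$ follows. The two orderings are not symmetric, precisely because $A$ depends on $\lambda$ only. As written, your argument establishes the $\omega_2$ half of the statement and leaves the $\omega_1$ half unproved.
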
 
\begin{proof} For $\omega_1=(\lambda,\alpha)$, we have $f_1(\lambda)=\lambda^{-1}$, $g_1(\alpha)=1$, $\,f_2(\lambda)=\sqrt{\frac{\pi^2}{6} +\left(1-\gamma-\log(\lambda)\right)^2},$ $g_2(\alpha)=\alpha^{-1}$. Therefore, $\pi_{\omega_1}(\lambda,\alpha)\propto f_1(\lambda)g_2(\alpha)\propto\lambda^{-1}\alpha^{-1}$. Analogously, we obtain $\pi_{\omega_2}(\lambda,\alpha)$.
\end{proof} 

Moreover, Berger et al. \cite{berger2015} suggested by starting with a collection of reference priors and then taking the arithmetic mean or the geometric mean to obtain an overall reference prior. Therefore, an overall reference prior is the same as (\ref{priorref1}).
Another well-known non-informative prior was introduced by Jeffreys \cite{jeffreys1946invariant} and can be obtained through the square root of the determinant of Fisher information matrix (\ref{mfishergg}), such prior is widely used due to its invariance property under one-to-one transformations. After some algebraic manipulations we have 
\begin{equation}\label{priorjefrreys}
\pi_J(\lambda,\alpha)\propto \frac{1}{\lambda\alpha},
\end{equation}
which is equal to (\ref{priorref1}). It is worth noting that such prior also arises considering the Jeffreys rule (see, Kass and Wasserman \cite{kass1996selection}). Therefore, even considering different methods to obtain non-informative priors, we have the same prior for the Fr\'echet distribution.

Combining the likelihood (\ref{loglikeiw}) and the prior (\ref{priorjefrreys}), the posterior distribution is
\begin{equation}\label{postphi1}
\pi(\lambda,\alpha|\boldsymbol{t})=\frac{\left(\lambda\alpha\right)^{n-1}}{c(\boldsymbol{t})}
\prod_{i=1}^n t_i^{-\alpha}\exp\left\{-\lambda\sum_{i=1}^n t_i^{-\alpha}\right\},
\end{equation}
where
\begin{equation}\label{posteriord2}
c(\boldsymbol{t})=\int\limits_{\mathcal{A}}\left(\lambda\alpha\right)^{n-1}
\prod_{i=1}^n t_i^{-\alpha}\exp\left\{-\lambda\sum_{i=1}^n t_i^{-\alpha}\right\}d\boldsymbol{\theta}
\end{equation}
and $\mathcal{A}=\{(0,\infty)\times(0,\infty)\}$ is the parameter space of $\boldsymbol{\theta}$. To get reliable inference, first we have to check whether the posterior distribution (\ref{postphi1}) is a proper posterior, i.e., $c(\boldsymbol{t})<\infty$. 

\begin{theorem} The posterior distribution (\ref{postphi1}) is proper if $n\geq2$.
\end{theorem}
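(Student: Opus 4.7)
My plan is to integrate out $\lambda$ first to reduce the claim to finiteness of a one-dimensional $\alpha$-integral, and then control the resulting integrand separately near $\alpha = 0$ and near $\alpha = \infty$.

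For each fixed $\alpha > 0$, the gamma identity $\int_0^\infty \lambda^{n-1} e^{-\lambda S(\alpha)}\,d\lambda = \Gamma(n)/S(\alpha)^n$, with $S(\alpha) = \sum_{i=1}^n t_i^{-\alpha}$, together with Tonelli (applicable since the integrand in (\ref{postphi1}) is non-negative) gives
$$c(\boldsymbol{t}) = \Gamma(n) \int_0^\infty \alpha^{n-1} \frac{\prod_{i=1}^n t_i^{-\alpha}}{\left(\sum_{i=1}^n t_i^{-\alpha}\right)^n}\,d\alpha.$$
The $\alpha$-integrand is continuous on $(0,\infty)$, so it suffices to check integrability at the two endpoints. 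Near $\alpha = 0^+$, each $t_i^{-\alpha} \to 1$, so the integrand behaves like $\alpha^{n-1}/n^n$, which is integrable on any neighborhood of $0$ as soon as $n \geq 1$.

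The tail $\alpha \to \infty$ is the main obstacle. I would exploit the scale invariance of the ratio: setting $t_* = \min_i t_i$ and $r_i = t_i/t_* \geq 1$, multiplying numerator and denominator by $t_*^{n\alpha}$ rewrites the ratio as $R^{-\alpha}/\bigl(\sum_i r_i^{-\alpha}\bigr)^n$, with $R = \prod_i r_i$. Since at least one $r_i$ equals $1$, the denominator is $\geq 1$, so the ratio is bounded by $R^{-\alpha}$. Provided the $t_i$ are not all equal (the standard nondegeneracy assumption, automatic almost surely under the continuous Fr\'echet model), some $r_i > 1$ and hence $R > 1$, yielding the integrable upper bound $\alpha^{n-1} R^{-\alpha}$ on $[1,\infty)$. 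Combining the two endpoint estimates gives $c(\boldsymbol{t}) < \infty$.

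It is worth pointing out that the hypothesis $n \geq 2$ is doing real work beyond making the $\lambda$-integral converge: already the case $n = 1$ produces a constant ratio and a divergent $\alpha$-integral $\int_0^\infty d\alpha$. Likewise, the non-degeneracy of the sample is tacit in the statement, since in the all-equal case the ratio collapses to the constant $1/n^n$ and the remaining integral $\int_0^\infty \alpha^{n-1}\,d\alpha$ diverges.
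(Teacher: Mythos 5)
Your proof is correct and follows essentially the same route as the paper's: Tonelli plus the gamma integral in $\lambda$, then bounding $\left(\sum_{i} t_i^{-\alpha}\right)^{-n}$ by $\min(t_1,\ldots,t_n)^{n\alpha}$ to obtain the integrable majorant $\alpha^{n-1}R^{-\alpha}$ with $R=\prod_i t_i/\min(t_1,\ldots,t_n)^{n}$. Your observation that the sample must not be degenerate (not all $t_i$ equal) for $R>1$ is a hypothesis the paper's proof leaves tacit, so on that point you are slightly more careful than the original.
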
 
\begin{proof} Since $\left(\lambda\alpha\right)^{n-1}
\prod_{i=1}^n t_i^{-\alpha}\exp\left\{-\lambda\sum_{i=1}^n t_i^{-\alpha}\right\}\geq 0$, by Tonelli theorem (see Folland \cite{folland}) we have
\begin{equation*}
\begin{aligned}
c(\boldsymbol{t})&=\int\limits_{\mathcal{A}}\left(\lambda\alpha\right)^{n-1}
\prod_{i=1}^n t_i^{-\alpha}\exp\left\{-\lambda\sum_{i=1}^n t_i^{-\alpha}\right\}d\boldsymbol{\theta} \\
&= \int\limits_0^{\infty}\int\limits_0^{\infty}\left(\lambda\alpha\right)^{n-1}
\prod_{i=1}^n t_i^{-\alpha}\exp\left\{-\lambda\sum_{i=1}^n t_i^{-\alpha}\right\}d\lambda d\alpha \\
&=\Gamma(n)\int\limits_0^{\infty} \alpha^{n-1}
\prod_{i=1}^n t_i^{-\alpha}\left\{\sum_{i=1}^n t_i^{-\alpha}\right\}^{-n} d\alpha \\ & \leq \int\limits_0^{\infty} \alpha^{n-1}
\prod_{i=1}^n \left(\frac{t_i}{\min({t_1,\ldots,t_n})}\right)^{-\alpha} d\alpha < \infty,
\end{aligned}
\end{equation*}
the last inequality holds if $n\geq 2$.
\end{proof}
 
The marginal posterior distribution for $\alpha$ is given by
\begin{equation}\label{postc1}
\pi(\alpha|\,\boldsymbol{t})\propto \alpha^{n-2}
\prod_{i=1}^n t_i^{-\alpha}\left\{\sum_{i=1}^n t_i^{-\alpha}\right\}^{-n}.
\end{equation}

The conditional  posterior distribution for $\lambda$ is 
\begin{equation}\label{postc2}
\pi(\lambda|\alpha,\boldsymbol{t})\sim \f{Gamma}\left(n,\sum_{i=1}^n t_i^{-\alpha}\right),
\end{equation}
where Gamma$(a, c)$ is the Gamma distribution with PDF, $f(x,a,c)=c^a\,x^{a-1}\exp(-c
x)/\Gamma(a)$. By considering  marginal and conditional  posterior distributions (\ref{postc1}) and (\ref{postc2}), the convergence of Markov Chain Monte Carlo (MCMC) method can be easily achieved. 

Abbas and Tang \cite{abbas2015analysis} derived the same priors (\ref{priorref1}) and proved that the obtained posterior is proper. However, the authors did not proved that the obtained posterior means for $\alpha$ and $\lambda$ are finite. These proofs are important in order to obtain reliable results. The following theorem prove that the posterior mean of $\lambda$ is improper depending on the data.

\begin{theorem}\label{theorempsmean} The posterior mean for $\lambda$ is improper in case $\prod_{i=1}^n \left(\frac{t_i}{\min({t_1,\ldots,t_n})}\right)\leq \min({t_1,\ldots,t_n})$ and $n\geq 2$.
\end{theorem}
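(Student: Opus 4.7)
The plan is to show that the numerator of $E[\lambda\mid\boldsymbol t]$ is infinite, while the denominator $c(\boldsymbol t)$ is finite by the preceding theorem, whence the posterior mean is infinite. Using the posterior density (\ref{postphi1}), the numerator is (up to the positive constant $c(\boldsymbol t)^{-1}$)
$$\int_0^{\infty}\!\!\int_0^{\infty}\lambda^n\,\alpha^{n-1}\prod_{i=1}^n t_i^{-\alpha}\,\exp\!\left\{-\lambda\sum_{i=1}^n t_i^{-\alpha}\right\}d\lambda\, d\alpha.$$
The first step I would carry out is the inner Gamma integral in $\lambda$, using $\int_0^\infty \lambda^n e^{-\lambda S(\alpha)}\,d\lambda=\Gamma(n+1)\,S(\alpha)^{-(n+1)}$ with $S(\alpha):=\sum_{i=1}^n t_i^{-\alpha}$. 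This reduces the numerator, up to the harmless factor $n!$, to
$$J:=\int_0^{\infty}\alpha^{n-1}\,P^{-\alpha}\,S(\alpha)^{-(n+1)}\,d\alpha,\qquad P:=\prod_{i=1}^n t_i.$$

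Next, setting $m=\min(t_1,\ldots,t_n)$, I would exploit the elementary bound $t_i^{-\alpha}\leq m^{-\alpha}$ for every $i$, which gives $S(\alpha)\leq n\,m^{-\alpha}$ and hence $S(\alpha)^{-(n+1)}\geq n^{-(n+1)}\,m^{(n+1)\alpha}$. Substituting into $J$ yields
$$J\geq n^{-(n+1)}\int_0^{\infty}\alpha^{n-1}\left(\frac{m^{n+1}}{P}\right)^{\!\alpha}d\alpha.$$
The hypothesis $\prod_{i=1}^n (t_i/m)\leq m$ is precisely $P\leq m^{n+1}$, so $m^{n+1}/P\geq 1$ and the integrand is bounded below by $n^{-(n+1)}\alpha^{n-1}$. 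Since $n\geq 2$, $\int_0^\infty \alpha^{n-1}\,d\alpha=\infty$, so $J=\infty$ and the numerator diverges while $c(\boldsymbol t)<\infty$, giving $E[\lambda\mid\boldsymbol t]=\infty$.

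The only subtlety, rather than a real obstacle, is getting the direction of the bound on $S(\alpha)$ right: the naive lower bound $S(\alpha)\geq m^{-\alpha}$ would give an upper bound on $S(\alpha)^{-(n+1)}$, useless here; instead one must upper-bound $S(\alpha)$ by $n\,m^{-\alpha}$ so that $S(\alpha)^{-(n+1)}$ is bounded below by a positive multiple of $m^{(n+1)\alpha}$. Pairing this with the $P^{-\alpha}$ factor produces the exponential base $m^{n+1}/P$, and the critical case $m^{n+1}/P\geq 1$ is exactly the stated threshold, under which the $\alpha$-integral diverges at infinity for all $n\geq 2$.
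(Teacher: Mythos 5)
Your proposal is correct and follows essentially the same route as the paper: integrate out $\lambda$ via the Gamma integral (Tonelli), bound $S(\alpha)=\sum_i t_i^{-\alpha}\leq n\,m^{-\alpha}$ from above so that $S(\alpha)^{-(n+1)}$ is bounded below, and observe that the hypothesis makes the resulting base $m^{n+1}/P\geq 1$, reducing the numerator to a divergent $\int_0^\infty \alpha^{n-1}\,d\alpha$. Your write-up is in fact slightly cleaner, since you keep the exponent $-(n+1)$ and the constant $n^{-(n+1)}$ consistent throughout and explicitly invoke finiteness of $c(\boldsymbol t)$ for $n\geq 2$.
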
 
\begin{proof} The posterior mean for $\lambda$ is given by
\begin{equation*}\hat{\lambda}(\boldsymbol{t}) =\int\limits_{\mathcal{A}}\lambda\left(\lambda\alpha\right)^{n-1}
\prod_{i=1}^n t_i^{-\alpha}\exp\left\{-\lambda\sum_{i=1}^n t_i^{-\alpha}\right\}d\boldsymbol{\theta} .
\end{equation*}
Notice that, since ${\prod_{i=1}^n t_i }/{\min({t_1,\ldots,t_n})^{n+1}}\leq 1$ by hypothesis it follows that $\left({\prod_{i=1}^n t_i }/{\min({t_1,\ldots,t_n})^{n+1}}\right)^{-\alpha}\geq 1^{-\alpha} = 1$.

Now, since $\lambda\left(\lambda\alpha\right)^{n-1} \prod_{i=1}^n t_i^{-\alpha}\exp\left\{-\lambda\sum_{i=1}^n t_i^{-\alpha}\right\}\geq 0$, by Tonelli theorem we have
\begin{equation*}
\begin{aligned}
\hat{\lambda}(\boldsymbol{t})&=\int\limits_{\mathcal{A}}\lambda\left(\lambda\alpha\right)^{n-1}
\prod_{i=1}^n t_i^{-\alpha}\exp\left\{-\lambda\sum_{i=1}^n t_i^{-\alpha}\right\}d\boldsymbol{\theta} \\
&= \int\limits_0^{\infty}
\prod_{i=1}^n t_i^{-\alpha}\int\limits_0^{\infty}\alpha^{n-1} \lambda^n\exp\left\{-\lambda\sum_{i=1}^n t_i^{-\alpha}\right\}d\lambda d\alpha \\
&=\Gamma(n+1) \int\limits_0^{\infty}  \alpha^{n-1}
\prod_{i=1}^n t_i^{-\alpha}\left\{\sum_{i=1}^n t_i^{-\alpha}\right\}^{-n} d\alpha \\ &
\geq \Gamma(n+1)n^{-n}\int\limits_0^{\infty} \alpha^{n-1} \left(\frac{\prod_{i=1}^n t_i }{\min({t_1,\ldots,t_n})^{n+1}}\right)^{-\alpha} d\alpha 
\\ 
&\geq \Gamma(n+1)n^{-n} \int\limits_0^{\infty} \alpha^{n-1} d\alpha =  \infty,
\end{aligned}
\end{equation*}
where the last inequality holds for $n\geq 2$.

\end{proof}

From the above theorem, we can see that the posterior mean of $\lambda$ may be improper depending on the data, which is undesirable. The use of Monte Carlo methods in improper posterior was discussed by Hobert and  Casella \cite{hobert1996effect}. The authors argued that ``one  can not expect the Gibbs output to provide a ``red flag", informing the user that the posterior is improper. The user must demonstrate propriety before a Markov Chain Monte Carlo technique is used.'' In our case, as the properness of the posterior mean of $\lambda$ depend on the data, other alternative is needed to be considered as Bayes estimator. A discussion about other alternatives will be presented in the next section.

\section{Simulation Study}

In this section, we present some experimental results to evaluate the performance of the different  methods
of estimation discussed in the previous sections.

\subsection{Classical approach}\label{anacla}

In this subsection, we compare the efficiency of the different estimation methods considering  classical approach. The following procedures are adopted:
\begin{enumerate}
\item Generate N samples from the Fr\'echet$(\lambda,\alpha)$ distribution with size $n$ and compute $\boldsymbol{\hat\theta}=(\hat{\lambda},\hat{\alpha})$ via MLE, ME, LME ,LSE, WLSE, PCE, MPS, CME and ADE.
\item Using $\boldsymbol{\hat\theta}$ and $\boldsymbol{\theta}$, compute the mean relative estimates (MRE)= $\sum_{i=1}^{N}({\hat\theta_i/\theta_j})/{N}$ and the mean squared errors (MSE)= $\sum_{i=1}^{N}{(\hat\theta_i-\theta_j)^2}/{N}$, for $j=1,2$.
\end{enumerate}

Considering the above approach, the most efficient estimator will be the one whose MREs is closer to one with smaller MSEs. These results are computed using the software R (see, R Core Team \cite{rteam}). The seed used to generate the pseudo-random samples from the Fr\'echet distribution was 2018. We have chosen the values to perform this procedure are $N=500,000$ and $n=(15,20,25,\cdots,140)$. We presented results only for $\boldsymbol\theta=(2,4)$ due to space constraint. However, results are similar for other choices of $\lambda$ and $\alpha$. Figure \ref{fsimulation1} shows the MREs, MSEs for the estimates of $\boldsymbol\theta$.  The horizontal lines in the Figure correspond to MREs and MSEs being one and zero respectively.  
	\begin{figure}[!h]
	\centering
	\includegraphics[scale=0.75]{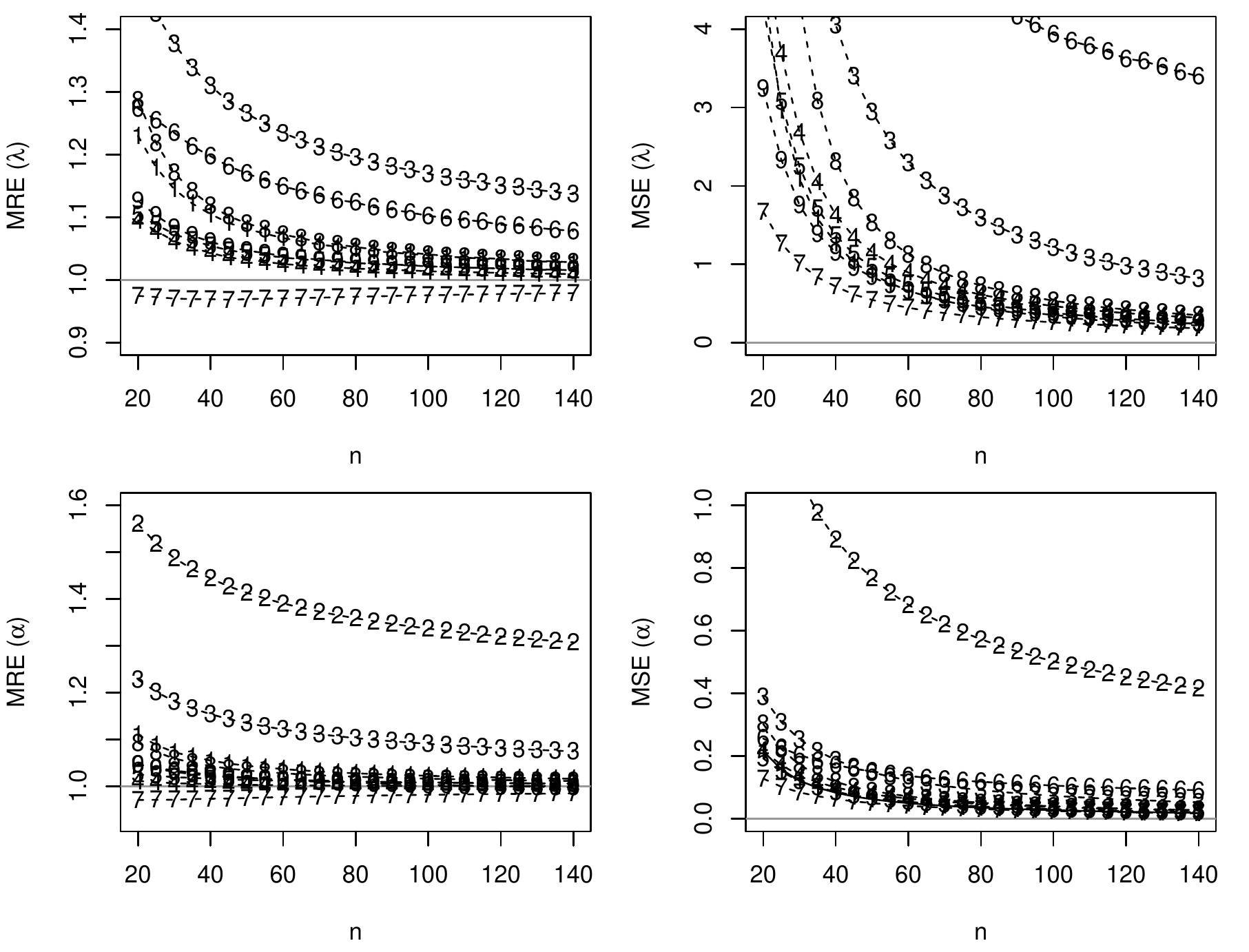}
	\caption{MREs, MSEs for the estimates of $\lambda=2$ and $\alpha=4$ for $N=500,000$ simulated samples, considering different values of $n$  using the following estimation method 1-MLE, 2-ME,3-LME ,4-LSE, 5-WLSE, 6-PCE, 7-MPS, 8-CME, 9-ADE.}\label{fsimulation1}
	\end{figure}

From  Figure 1, we observe that the MSEs of all estimators of the parameters tend to zero for large $n$ and also, the values of
MREs tend to one, i.e. the estimators are asymptotically unbiased and consistent for the parameters. For both parameters, we observe that the moment estimators has the largest MREs and MSEs respectively among all the considered estimators. Further, we also observe that the MPS, ADE, LSE and WLSE performs better than the MLEs for small and moderate sample sizes in terms of MREs and MSEs. Moreover, the MPS estimators have the smallest MSEs among all the considered estimators.

Combining all results with the good properties of the MPS method such as consistency, asymptotic efficiency, normality and invariance, we suggest to use MPS estimators of the parameters of Fr\'echet distribution in all practical purposes.

\subsection{Bayesian approach}

In this subsection, we obtain the Bayes estimator under the same assumptions of section \ref{anacla}. The $95\%$ coverage probability of the asymptotic confidence intervals under the classical set-up and the credible intervals (CI$_{95\%}$) under the Bayesian set-up are also evaluated. For large number of experiments and considering  confidence level of $95\%$, the frequencies of intervals that covered the true values of $\boldsymbol{\theta}$ should be closer to $95\%$. 

\subsubsection{Metropolis-Hastings (M-H) algorithm}

Since the marginal posterior distribution of $\alpha$ does not have closed form, the Metropolis-Hastings (M-H) algorithm is applied to generate samples from this marginal distribution. In this case, we have used the Gamma distribution as transition kernel $q\left(\alpha^{(j)}|\alpha^{(*)},b\right)$ for sampling values of $\alpha$, where b is a known hyperparameter that controls the acceptance rate of the algorithm. It is worth mentioning that, the choice of the transition kernel is arbitrary, and other non-negative random variables could also be used instead. The M-H  algorithm operates as follows:

\begin{enumerate}

\item Start with an initial value $\alpha^{(1)}$ and set the iteration counter $j=1 ;$

\item Generate a random value $\alpha^{(*)}$ from the proposal $\f{Gamma}(\alpha^{(j)},b)$;

\item Evaluate the acceptance probability
\begin{equation*}
h\left(\alpha^{(j)},\alpha^{(*)}\right)=\min\left(1, \frac{\pi\left(\alpha^{(*)}|\boldsymbol{t}\right)}{\pi\left(\alpha^{(j)}|\boldsymbol{t}\right)} \frac{\f{q}\left(\alpha^{(j)},\alpha^{(*)},b\right)}{\f{q}\left(\alpha^{(*)},\alpha^{(j)},b\right)}\right),
\end{equation*}
where $\pi(\cdot)$ is given in  (\ref{postc1}). Generate a random value $u$ from an independent uniform in $(0,1)$;

\item If $h\left(\alpha^{(j)},\alpha^{(*)}\right)\geq u(0,1)$ then $\alpha^{(j+1)}=\alpha^{(*)}$, otherwise $\alpha^{(j+1)}=\alpha^{(j)}$;

\item Change the counter from $j$ to $j + 1$ and return to step 2 until convergence
is reached.
\end{enumerate}

In this case, we choose b to be equal to one. However, other values can also be considered. To decrease the necessary time taken for M-H method to reach the convergence, we can use (\ref{lmomentse1}) as a good initial value for $\alpha^{(1)}$. Considering the conditional posterior distribution (\ref{postc2}), the Bayes estimator for $\lambda$ can be obtained direcly from the Gamma distribution with $\left(n,\sum_{i=1}^n t_i^{-\hat{\alpha}_{Bayes}}\right)$ as well as its respective credible interval is evaluated by the quantile function. The decision rule used to obtain the Bayes estimators will be presented in the following.
	
\subsubsection{Bayes estimator}

The selection of a decision rule to obtain posterior estimates is of fundamental problem in Bayesian statistics. Usually, this problem is over looked by many authors. The most common risk function used to obtain the Bayes estimates is the mean squared error, by considering this risk function, the obtained Bayes estimates are the posterior means. Other alternative functions can be considered, for instance, the posterior mode, also known as maximum a posteriori probability (MAP) is obtained by assuming a 0-1 loss function, while the posterior median is obtained considering a linear loss function. 

For the Fr\'echet distribution, Abbas and Tang \cite{abbas2015analysis} presented two Bayes estimators, the first is the MAP estimators obtained from Laplace's approximation. Despite of the fact that reference priors are invariant under-one-to-one transformation, the MAP is not \cite{murphy2012machine}, and therefore, the obtained Bayes estimator is not invariant. Although the authors  considered other Bayes estimator, but they did not specify how the proposed estimator was obtained. This lack of information makes such results non-reproducible, which is undesirable for any application. As the authors used Monte Carlo methods, the most common Bayesian estimator is the posterior mean. However, from Theorem \ref{theorempsmean} we have proved that the posterior mean may be improper depending on the data, which is undesirable. In fact, consider the example analyzed by Abbas and Tang \cite{abbas2015analysis} related to fatigue lifetime data. The data is given by: 152.7, 172.0, 172.5, 173.3, 193.0, 204.7, 216.5, 234.9, 262.6, 422.6. From the proposed theorem, the posterior mean of $\lambda$ is improper if
\begin{equation*}
\prod_{i=1}^n \left(\frac{t_i}{\min({t_1,\ldots,t_n})}\right)\leq \min({t_1,\ldots,t_n}),
\end{equation*}
since $25.4\leq 152.7$, then the posterior mean of $\lambda$ is improper and can not be used. As it was discussed earlier, for the proposed dataset we can easily draw a sample for the marginal distribution and compute the posterior mean without any ``red flag'', but such estimate is meaningless. Therefore, for the Fr\'echet distribution the posterior median is a reasonable choice as it is invariant under-one-to-one transformation and finite for $n\geq2$ almost surely.

\subsubsection{Results}	

For each simulated data set under the Bayesian approach, $5,500$ iterations are performed using the MCMC methods. As a burning sample, we discarded the first $500$ initial values taking jumps of size $5$  to reduce the auto-correlation values among the chain, getting at the end one chain of size $1,000$. To validate the convergence of the obtained chain, we used the Geweke criterion \cite{geweke1991evaluating} with a $95\%$ confidence level. At the end, $10,000$ posterior medians for $\alpha$ and $\lambda$ were computed.
	\begin{figure}[!htb]
	\centering
	\includegraphics[scale=0.75]{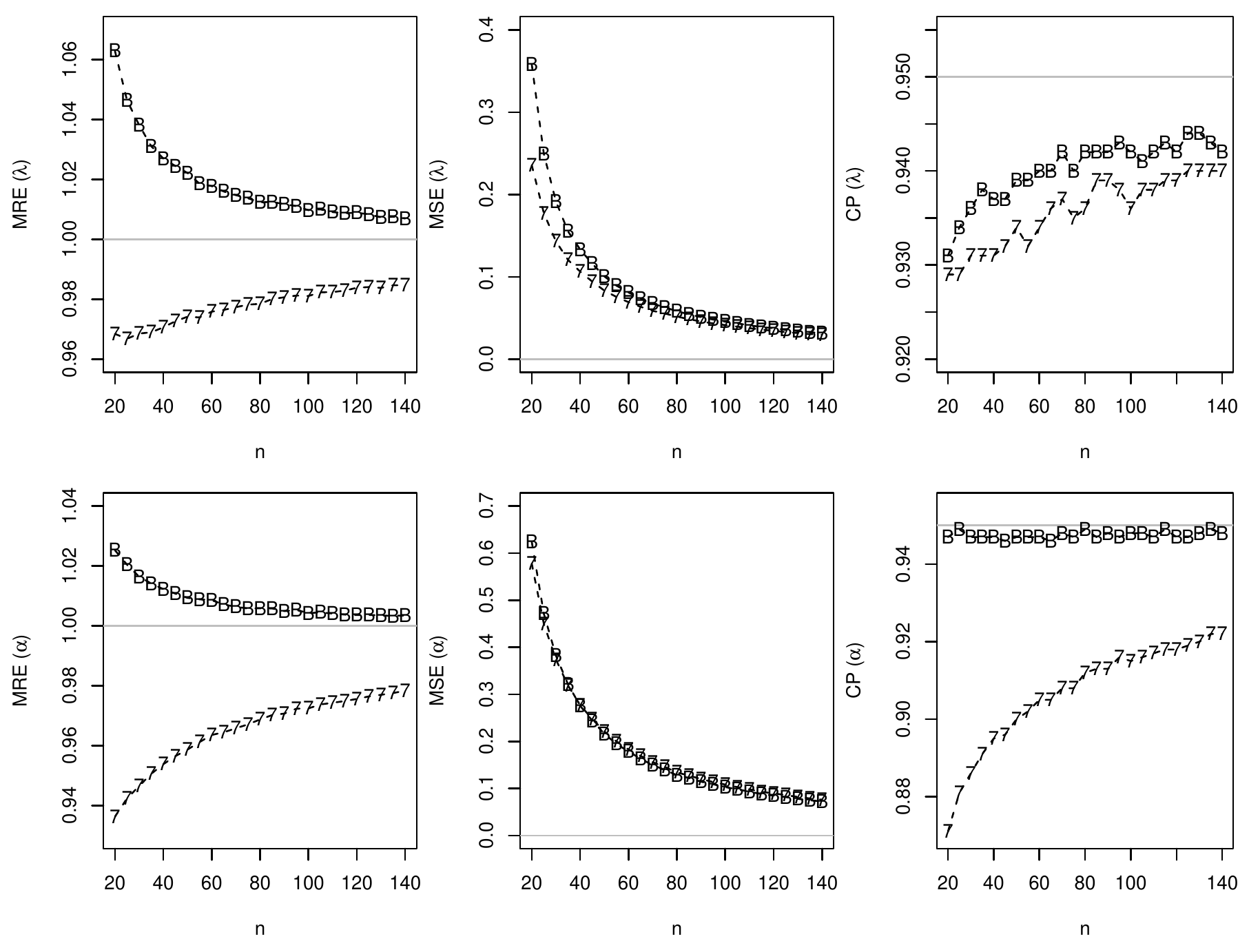}
	\caption{MREs, MSEs and the coverage probability with a $95\%$ confidence level for the estimates of $\lambda=2$ and $\alpha=4$ for $N=500,000$ simulated samples, considering different values of $n$  using the following estimation method, 7 - MPS, B - Bayes estimator.}\label{fsimulation2}
	\end{figure}
	
 The results are presented only for $\boldsymbol\theta=(2,4)$ due to space constraint. However, the following results are also similar for other choices of $\lambda$ and $\alpha$. Figure \ref{fsimulation2} presents the MREs, MSEs and the coverage probability with a confidence level equal to $95\%$ for the estimates obtained for the MPS and posterior medians. It is to be noted that, since MPS performs better than their counterparts,  we compare Bayes estimates with MPS.

From Figure \ref{fsimulation2} we observe that the Bayes estimates have the MRE nearest to one while both approaches ( MPS and Bayes estimates)  have approximately the same MSE. Moreover, considering the Bayesian approach, we have obtained accurate coverage probability through the  highest posterior density intervals. Therefore, we conclude that the Bayesian approach is preferred in order to make inference on  unknown parameters of the Fr\'echet distribution.

\section{Applications}\label{aplicwe}

In this section, we analyze five real data sets related to  minimum monthly flows of water (m$^3$/s) on the Piracicaba River, located in S\~ao Paulo state, Brazil. This study can be useful to protect and maintain aquatic resources for the state \cite{tennant1976instream, reiser1989status}. The data sets (see in Appendix B for more details) are obtained from the Department of Water Resources and Power agency manager of water resources of the State of S\~ao Paulo from 1960 to 2014.

Table \ref{tableest} presents the posterior median (Bayes estimators) and $95\%$ credible intervals for $\lambda$ and $\alpha $ of the Fr\'echet distribution for the  data sets related to the total monthly rainfall during May, June, July, August and September at Piracicaba River.

\begin{table}[!h]
\caption{Bayes estimates, $95\%$ credible intervals for $\lambda$ and $\alpha$ for the data sets related to the total monthly rainfall during May, June, July, August and September at Piracicaba River.}\centering 
\par
\begin{center}
\begin{tabular}{c|c|r|c}
\hline
Month & $\boldsymbol{\theta}$ & $\boldsymbol{\hat{\theta}}_{Bayes}$ & CI$_{95\%}(\boldsymbol{\theta})$ \\ \hline
\multirow{2}{*}{May} & \ \ $\lambda$ \ \ & 309.890 & (223.248; 416.505)  \\ 
& \ \ $\alpha$ \ \   &  1.817 & (1.401;  2.293) \\ \hline
\multirow{2}{*}{June} & \ \ $\lambda$ \ \  & 89.758 & (64.376; 121.075) \\ 
& \ \ $\alpha$ \ \  & 1.585  & (1.194; 2.033) \\ \hline
\multirow{2}{*}{July} & \ \ $\lambda$ \ \  &  204.493 & (146.666; 275.840) \\ 
& \ \ $\alpha$ \ \  &  2.048 & (1.549; 2.637) \\ \hline
\multirow{2}{*}{August} & \ \ $\lambda$ \ \  & 401.656 & (290.594; 537.969) \\ 
& \ \ $\alpha$ \ \  & 2.4585 & (1.876; 3.119) \\ \hline
\multirow{2}{*}{September} & \ \ $\lambda$ \ \  & 55.128 & (39.539; 74.362) \\ 
& \ \ $\alpha$ \ \  & 1.529 & (1.163; 1.939) \\ \hline
\end{tabular}\label{tableest}
\end{center}
\end{table}

The results obtained using the Fr\'echet distribution are compared with the Weibull, Gamma, Lognormal (LN), Gumbel and Generalized Exponential (GE) distribution \cite{gupta2001generalized}. We consider certain discrimination criteria  such as  BIC (Bayesian Information Criteria), AIC (Akaike Information Criteria) and AICc (Corrected Akaike information criterion) and computed respectively by BIC$=-2l(\hat{\boldsymbol{\theta}};\boldsymbol{t})+k\log(n)$, AIC$=-2l(\hat{\boldsymbol{\theta}};\boldsymbol{t})+2k$ and AICc$=$AIC$+[{2\,k\,(k+1)}]/[{(n-k-1)}]$, where $k$ is the number of parameters to be fitted and $\hat{\boldsymbol{\theta}}$ is estimation of $\boldsymbol{\theta}$. Given a set of candidate models for $\boldsymbol{t}$, the preferred model is the one which provides the minimum values of the aforementioned statistics.

Table \ref{tabelaic} presents the results of BIC, AIC and  AICc, for different probability distributions. The goodness of fit can also be checked through the over plot of the survival function adjusted by the proposed theoretical models onto the empirical survival function as shown  in Figure \ref{grafico-obscajust1}.
\begin{table}[!h]
\caption{Results of the BIC,  AIC and the AICc for different probability distributions for the data sets related to the minimum flows of water (m$^3$/s) during May, June, July and August at Piracicaba River.}\centering 
\par
\begin{center}
\begin{tabular}{c|c|r|r|r|r|r|r}
\hline
\multicolumn{1}{c|}{Month}  & \multicolumn{1}{c|}{Test}  & \multicolumn{1}{c|}{Fr\'echet} & \multicolumn{1}{c|}{Weibull} & \multicolumn{1}{c|}{Gamma} & \multicolumn{1}{c|}{LN} & \multicolumn{1}{c|}{Gumbel} & \multicolumn{1}{c}{GE} \\ \hline
\multirow{3}{*}{May} & \ \ BIC \ \  & \textbf{361.43} & 390.91 & 386.90 & 369.93 & 394.23 & 384.04 \\ 
& \ \ AIC \ \                       & \textbf{358.06} & 387.53 & 383.52 & 366.55 & 390.85 & 380.66  \\ 
& \ \ AICc \ \                      & \textbf{358.38} & 387.86 &  383.84 &  366.88 & 391.18 &  380.98 \\  \hline
\multirow{3}{*}{June} & \ \ BIC \ \ & \textbf{346.92} & 379.72 & 381.02 & 359.70 & 403.55 & 380.81 \\ 
& \ \ AIC \ \                       & \textbf{343.60} & 376.39 & 377.69 & 356.37 & 400.22 &  377.48  \\ 
& \ \ AICc \ \                      & \textbf{343.93} &  376.73 &  378.03 &  356.70 & 400.55 &  377.81 \\  \hline
\multirow{3}{*}{July} & \ \ BIC \ \ & \textbf{302.86} & 336.50 & 332.78 & 316.75 & 341.31 & 330.30 \\
& \ \ AIC \ \                       & \textbf{299.54} &  333.17 & 329.45 & 313.42 & 337.98 &  326.97  \\ 
& \ \ AICc \ \                      & \textbf{299.87} & 333.50 &  329.79 &  313.75 & 338.32 &  327.30 \\  \hline
\multirow{3}{*}{August} & \ \ BIC \ \ & \textbf{283.92} & 310.33 & 303.41 & 294.30 & 303.68 & 299.35 \\ 
& \ \ AIC \ \                         & \textbf{280.49} & 306.90 & 299.98 & 290.87 & 300.25 &  295.92  \\ 
& \ \ AICc \ \                        & \textbf{280.81} & 307.22 &  300.30 &  291.19 &  300.57 &  296.24 \\  \hline
\multirow{3}{*}{September} & \ \ BIC \ \ & \textbf{329.06} & 344.21 & 341.77 & 332.96 & 351.45 & 340.68 \\
& \ \ AIC \ \                            & \textbf{325.73} & 340.89 & 338.44 & 329.63 & 348.12 &  337.35  \\ 
& \ \ AICc \ \                           & \textbf{326.06} & 341.22 &  338.77 & 329.96 & 348.45 & 337.69 \\  \hline
\end{tabular}\label{tabelaic}
\end{center}
\end{table}

\begin{figure}[!h]
\centering
\includegraphics[scale=0.68]{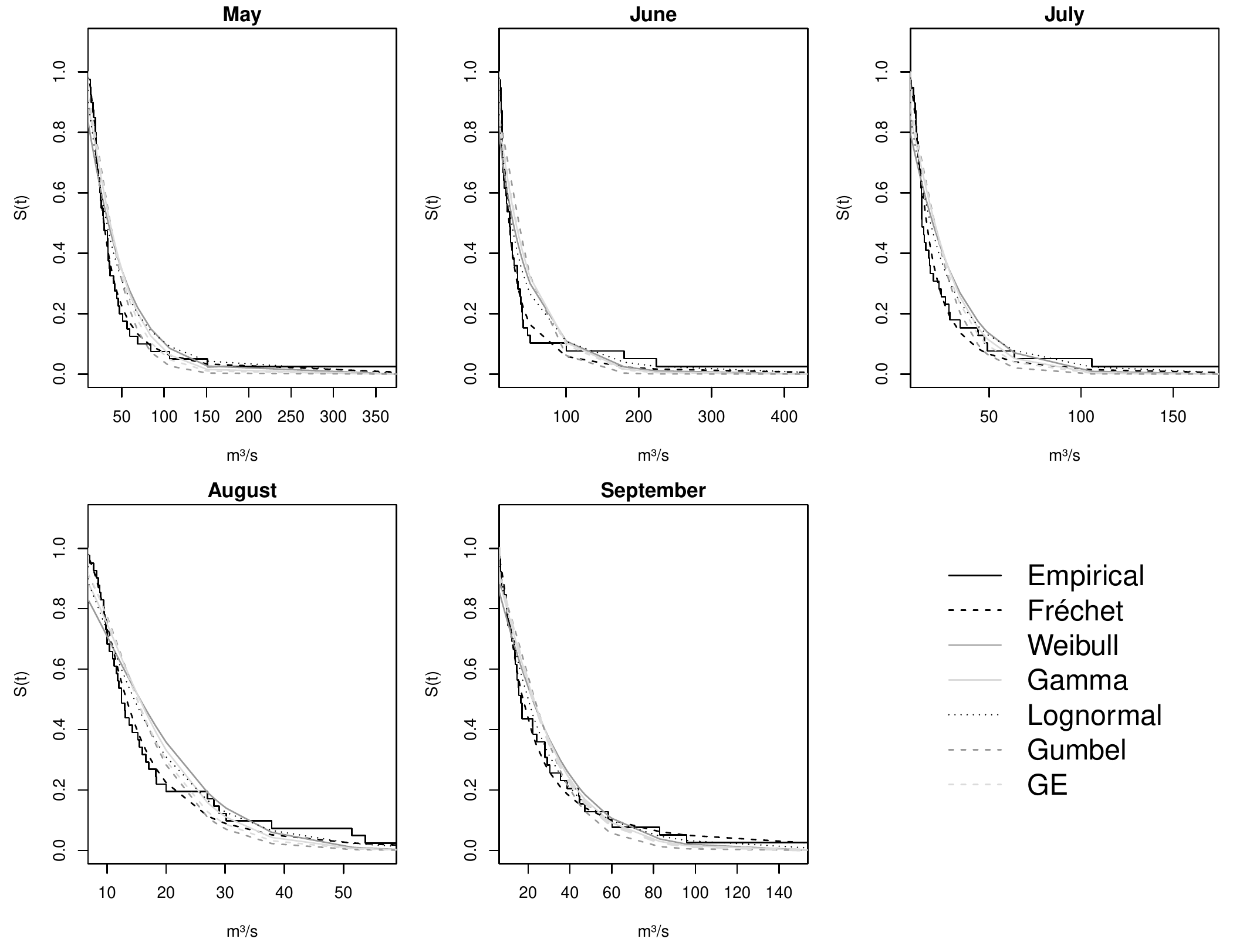}
\caption{Survival function adjusted by the empirical and the different distributions for the data sets related to the minimum flows of water (m$^3$/s) during May till September at Piracicaba River.}
\label{grafico-obscajust1}
\end{figure}

Comparing the empirical survival function with the adjusted distributions, we observe that Fr\'echet distribution fits better than the chosen models. These results are confirmed from the BIC, AIC and AICc values, since Fr\'echet distribution has the minimum values for the proposed data sets. Therefore,  our proposed methodology can be used successfully to analyze the minimum flow of water during May, June, July,  August and September at Piracicaba River  using the Fr\'echet distribution with the Bayesian approach.

\section{Conclusions}

In this paper, we considered different methods of estimation of the unknown parameters both from frequentist and Bayesian viewpoint of 
Fr\'echet distribution. We considered the MLEs, MMEs, LME, PCEs, LSEs, WLSE, MPS, CME and ADE as frequentist estimators. As it is
not feasible to compare these methods of estimation theoretically, we have performed an extensive simulation study to compare these nine methods of estimation. We compared these frequentist estimators mainly with respect to MREs and MSEs. The results show that among the classical estimators, MPS performs better than their counterpart in terms of  MSEs in the simulation study. Additionally, we have obtained Bayes estimates through posterior median and compared with respect to MREs and MSEs. We have obtained accurate coverage probability through the credible intervals. Therefore, combining all these results, we conclude that the Bayesian approach is preferred in order to make inference on unknown parameters of the Fr\'echet distribution. We also compared this model with some existing competing models and the Fr\'echet distribution performed reasonably well.

\section*{Acknowledgements}

The authors are thankful to the Editorial Board and to the reviewers for their valuable comments and suggestions which led to this improved version. 
The research was partially supported by CNPq, FAPESP and CAPES of Brazil.

\section*{Disclosure statement}

No potential conflict of interest was reported by the author(s)

\bibliographystyle{tfs}

\bibliography{interacttfssample}

\appendix

\section{Proof of Theorem \ref{themleune}.}

\begin{proof} It is important to notice that the solutions to the likelihood equation (\ref{likelihood1}) and (\ref{likelihood2}) may fail to be the global maximum of $l(\lambda,\alpha | t)$. However, the reasoning that follows justify why in this case these solution will in fact be the global maximum. Notice that
\begin{equation*}
\max_{ \lambda\in \mathbb{R}^+,\ \alpha\in \mathbb{R}^+} l(\lambda, \alpha|t)= \max_{\alpha\in \mathbb{R}^+ }\left(\max_{\lambda\in \mathbb{R}^+} l(\lambda, \alpha|t)\right)
\end{equation*}

To find $\max_{\lambda\in \mathbb{R}^+} l(\lambda, \alpha|t)$ we notice its derivative in relation to $\lambda$, for $\alpha$ fixed, is given by $\frac{\partial l(\lambda,\alpha | t)}{\partial \lambda} = \frac{n}{\lambda} -\sum_{i=1}^n t_i^{-\alpha}$ which is positive for $\lambda < \widehat{\lambda }(\alpha) = \frac{n}{\sum_{i=1}^n t_i^{-\alpha}}$ and negative in case $\lambda > \widehat{\lambda }(\alpha)$. Therefore $\widehat{\lambda}(\alpha)$ is the unique value that provides $\max_{\lambda\in \mathbb{R}^+} l(\lambda, \alpha|t)$ for $\alpha$ fixed and
\begin{equation*}
\max_{\lambda\in \mathbb{R}^+} l(\lambda, \alpha|t) = l\left(\widehat{\lambda }(\alpha),\alpha\right)
\end{equation*}

Now, to find $\max l\left(\widehat{\lambda}(\alpha),\alpha | t\right)$ notice, by the chain rule, that $l\left(\hat{\lambda}(\alpha),\alpha | t\right)$ is a differentiable function in relation to $\alpha$ with derivative given by $\frac{d l(\lambda(\alpha),\alpha | t)}{d \alpha} =  \frac{\partial l(\lambda(\alpha),\alpha|t)}{\partial \lambda}\lambda'(\alpha) + \frac{\partial l(\lambda(\alpha),\alpha | t)}{\partial \alpha} = \frac{\partial l(\lambda(\alpha),\alpha | t)}{\partial \alpha}$ because $\frac{\partial l(\lambda(\alpha),\alpha | t)}{\partial \lambda} = 0$. Let $G(\alpha) = \frac{d l(\lambda(\alpha),\alpha | t)}{d \alpha}$

Let us prove that
\begin{equation*}
G(\alpha )=\frac{n}{\alpha }-\sum_{i=1}^{n}\log t_{i}+\frac{n}{
\sum\limits_{i=1}^{n}t_{i}^{-\alpha }}\sum_{i=1}^{n}t_{i}^{-\alpha }\log t_{i}
\end{equation*}%
admits a unique solution. It is straightforward to see that $\lim {}_{\alpha
\rightarrow 0+}G(\alpha )=\infty$. We now prove that $\lim {}_{\alpha
\rightarrow \infty}G(\alpha )$ exists and is negative. Let $u_i = \frac{t_i}{\sqrt[n]{\prod_{i=1}^n t_i}}$ for $i=1\cdots, n$. Then $t_i=u_i \sqrt[n]{\prod_{i=1}^n t_i}$ for $i=1,\cdots, n$ and therefore
\begin{equation*}
\begin{aligned}
G(\alpha )&=\frac{n}{\alpha }-\sum_{i=1}^{n}\log t_{i}+\frac{n}{
\sum\limits_{i=1}^{n}u_{i}^{-\alpha }}\sum_{i=1}^{n}u_{i}^{-\alpha }\log\left(u_i \sqrt[n]{\prod_{i=1}^n t_i}\right) \\
&=\frac{n}{\alpha } -\sum_{i=1}^{n}\log t_{i} + \log{{\prod_{i=1}^n t_i}} + \frac{n}{
\sum\limits_{i=1}^{n}u_{i}^{-\alpha }}\sum_{i=1}^{n}u_{i}^{-\alpha }\log\left(u_i\right) \\
&=\frac{n}{\alpha } + \frac{n}{
\sum\limits_{i=1}^{n}u_{i}^{-\alpha }}\sum_{i=1}^{n}u_{i}^{-\alpha }\log\left(u_i\right).
\end{aligned}
\end{equation*}%
Now let $u_{\min} = \min{u_1,\cdots,u_n}$, and $r$ be the number of times $u_{\min}$ appears in $u_1,\cdots,u_m$. Then $\lim_{\alpha\to \infty} \frac{\sum_{i=1}^{n}u_{i}^{-\alpha }}{u_{\min}^{-\alpha}} = r$, $\lim_{\alpha\to \infty} \frac{\sum_{i=1}^{n}u_{i}^{-\alpha }\log(u_i)}{u_{\min}^{-\alpha}} = r\log(u_{\min})$ and therefore

\begin{equation*} \lim_{\alpha\to \infty} G(\alpha) = 0 + \frac{n}{r} r\log(u_{\min}) = n \log(u_{\min}) < 0,
\end{equation*}
where the last inequality follows since $u_{\min} < 1$, which is true by consequence of the hypothesis that not all $t_i$ are equal. It follows that, because of the intermediate value theorem, $G(\alpha )$ has at
least one root in the interval $[0,\infty )$.

We now prove that $G^{\prime }(\alpha
)$ is negative, which in turn will imply that $G(\alpha )$ cannot have more
than one root in $[0,\infty)$, where $G^{\prime }(\alpha )$ is given by 
\begin{equation}
G^{\prime }(\alpha )=-\frac{n}{\alpha ^{2}}-n\frac{\sum
\limits_{i=1}^{n}t_{i}^{-\alpha }\left( \log t_{i}\right) ^{2}\left(
\sum\limits_{i=1}^{n}t_{i}^{-\alpha }\right) -\left(
\sum\limits_{i=1}^{n}t_{i}^{-\alpha }\log t_{i}\right) ^{2}}{\left(
\sum\limits_{i=1}^{n}t_{i}^{-\alpha }\right) ^{2}}.
\end{equation}%
To show that $G^{\prime }(\alpha )$ is negative, it is enough to show that
\begin{equation}\label{equnieq2}
\sum_{i=1}^{n}t_{i}^{-\beta }\left( \log t_{i}\right) ^{2}\left(
\sum_{i=1}^{n}t_{i}^{-\beta }\right) -\left( \sum_{i=1}^{n}t_{i}^{-\beta
}\log t_{i}\right) ^{2}>0.
\end{equation}

To prove this, one can use the Cauchy-Schwartz inequality (a special case of Holder's inequality with $p=q=2$) stated as follows
\begin{equation}
\left(\sum\limits_{i=1}^{n} a_{i}^{2}\right)\left(\sum\limits_{i=1}^{n} b_{i}^{2}\right)\ge \left(\sum\limits_{i=1}^{n} a_{i} b_{i}\right)^{2},
\end{equation}%
where equality holds if $\frac{a_{i}}{b_{i}} = $ constant.

To prove the equation (\ref{equnieq2}), take $a_{i}^{2}= t_{i}^{-\alpha}(\log t_{i})^{2}$  and $b_{i}^{2}= t_{i}^{-\alpha}$. Then clearly $a_{i}b_{i}=t_{i}^{-\alpha}\log t_{i}$, and hence the inequality (\ref{equnieq2}) follows easily from the application of the Cauchy-Schwartz inequality.

It follows that $G(\alpha) = \frac{d l(\lambda(\alpha),\alpha | t)}{d \alpha}$ has only on root $\widehat{\alpha}$ such that $G(\alpha)>0$ for $\alpha < \widehat{\alpha}$ and $G(\alpha)<0$ for $\alpha>\widehat{\alpha}$ and therefore $\widehat{\alpha}$ is the only value that attains the maximum of $l(\lambda(\alpha),\alpha)$. Then
\begin{equation*} \max_{ \lambda\in \mathbb{R}^+,\ \alpha\in \mathbb{R}^+} l(\lambda, \alpha|t) = l\left(\hat{\lambda}(\hat{\alpha}),\hat{\alpha}\right) = l(\hat{\lambda},\hat{\alpha})
\end{equation*}
and $(\hat{\lambda},\hat{\alpha})$ is the only pair that attains the maximum of $l(\lambda,\alpha)$.

\end{proof}

\section{Data set}

The datasets used in Section \ref{aplicwe} is given as follow:

\begin{itemize}
\item May: 29.19, 18.47, 12.86, 151.11, 19.46, 19.46, 84.30, 19.30, 18.47, 34.12, 374.54, 19.72, 25.58, 45.74, 68.53, 36.04, 15.92, 21.89, 40.00, 44.10, 33.35, 35.49, 56.25, 24.29, 23.56, 50.85, 24.53, 13.74, 27.99, 59.27, 13.31, 41.63, 10.00, 33.62, 32.90, 27.55, 16.76, 47.00, 106.33, 21.03.

\item June: 13.64, 39.32, 10.66, 224.07, 40.90, 22.22, 14.44, 23.59, 47.02, 37.01, 432.11, 10.63, 28.51, 11.77, 25.35, 25.80, 39.73, 9.21, 22.36, 11.63, 33.35, 18.00, 18.62, 17.71, 100.10, 23.32, 11.63, 10.20, 12.04, 11.63, 50.57, 11.63, 33.72, 14.69, 12.30, 32.90, 179.75, 37.57, 7.95.

\item July: 12.98, 15.66, 13.18, 174.94, 10.35, 47.52, 13.28, 24.03, 11.40, 22.71, 43.96, 9.38, 11.40, 13.28, 14.84, 
14.44, 63.74, 12.04, 17.26, 28.74, 12.25, 10.22, 26.25, 13.31, 28.24, 12.88, 17.71, 8.82, 10.40, 7.67, 49.15, 17.93, 9.80, 105.88, 10.77, 13.49, 19.77, 34.22, 7.26.

\item August: 16.00, 9.52, 9.43, 53.72, 17.10, 8.52, 10.00, 15.23, 8.78, 28.97, 28.06, 18.26, 9.69, 51.43, 10.96, 13.74, 20.01, 10.00, 12.46, 10.40, 26.99, 7.72, 11.84, 18.39, 11.22, 13.10, 16.58, 12.46, 58.98, 7.11, 11.63, 8.24,  9.80, 15.51, 37.86, 30.20, 8.93, 14.29, 12.98, 12.01, 6.80. 

\item September: 29.19, 8.49, 7.37, 82.93, 44.18, 13.82, 22.28, 28.06, 6.84, 12.14, 153.78, 17.04, 13.47, 15.43, 30.36, 6.91, 22.12, 35.45, 44.66, 95.81, 6.18, 10.00, 58.39, 24.05, 17.03, 38.65, 47.17, 27.99, 11.84, 9.60, 6.72, 13.74, 14.60, 9.65, 10.39, 60.14, 15.51, 14.69, 16.44
\end{itemize}

\section{Code of the Metropolis-Hasting algorithm within Gibbs}

{\footnotesize
\begin{verbatim}
library(coda)
###########################################################################
### Gibbs with Metropolis-Hasting algorithm ###
### R: Iteration Number; burn: Burn in; ### 
###jump: Jump size; b= Control generation values ###
### log_posteriori: logarithm of posteriori density ###
###########################################################################

MCMC<-function(t,R,burn,jump,b=1) {
log_posteriori <- function (alfa) {
posterior= (n-2)*log(alfa)-(alfa*sum(log(t)))-n*log(sum(t^(-alfa)))
return(posterior) }  ##logarithm of the marginal posterior of alpha
valpha<-length(R+1) ; n<-length(t)
valpha[1]<-max(log(2)/(log((((2/(n*(n-1))*sum(seq(0,n-1,1)*sort(t)))-mean(t))/mean(t))+1)),1)  
##Set the initial value of alpha based on the L-moments
c1<-rep(0,times=R) 
###Starts the M-H algorithm described in Section 4.2.1
for(i in 1:R){
prop1<-rgamma(1,shape=b*valpha[i],rate=b)
ratio1<-log_posteriori(prop1)+dgamma(valpha[i],shape=b*prop1,rate=b,log=TRUE)-dgamma(
prop1,shape=b*valpha[i],rate=b,log=TRUE)-log_posteriori(valpha[i])
h<-min(1,exp(ratio1)); u1<-runif(1)
if (u1<h & is.finite(h)) {valpha[i+1]<-prop1 ; c1[i]<-0} else 
{valpha[i+1]<-valpha[i] ; c1[i]<-1} }
###Ends the M-H algorithm
valpha2<-valpha[seq(burn,R,jump)]  ###Remove the burn-in and takes jump
ge1<-abs(geweke.diag(valpha2)$z[1]) ### Compute the  Geweke diagnostic
alpha<-median(valpha2);    ### Compute the median of alpha
### Compute the median of lambda
lambda<-qgamma(0.5, shape=n, rate = sum(t^(-alpha)), lower.tail = TRUE) 
prai<-quantile(valpha2, probs = 0.025)  ## Compute the Lower credibility interval of alpha
pras<-quantile(valpha2, probs = 0.975)  ## Compute the Upper credibility interval of alpha
## Compute the Lower credibility interval of lambda
prli<-qgamma(0.025, shape=n, rate = sum(t^(-alpha)), lower.tail = TRUE)
## Compute the Upper credibility interval of lambda 
prls<-qgamma(0.975, shape=n, rate = sum(t^(-alpha)), lower.tail = TRUE)  
return(list(acep=(1-sum(c1)/length(c1)),lambda=lambda,alpha=alpha, LCI_alpha=prai,
UCI_alpha=pras, LCI_Lambda=prli, UCI_Lambda=prls, Geweke.statistics=ge1))
}

################################################################
## Example ### t: data vector ###
################################################################

rIW<-function(n,lambda,alpha) {
 U<-runif(n,0,1)
 t<-(((1/lambda)*(log(1/U)))^(-1/alpha)) 
return(t) }

set.seed(2018)
t<-rIW(n=30,lambda=4,alpha=2)

MCMC(t,R=15000,burn=500,jump=5)

$acep          ##Aceptance rate
[1] 0.2490667

$lambda        ##Posterior median of lambda
[1] 4.880441

$alpha         ##Posterior median of alpha
[1] 2.236982

$LCI_alpha     ##Lower credibility interval of alpha
    2.5% 
1.618934 

$UCI_alpha     ##Upper credibility interval of alpha
   97.5% 
2.917118 

$LCI_Lambda    ##Lower credibility interval of lambda
[1] 3.329736

$UCI_Lambda    ##Upper credibility interval of lambda
[1] 6.851465

$Geweke.statistics ## Geweke Statitics
1.835165 
\end{verbatim} }

\end{document}